\title{Algorithmic Contract Design for Crowdsourced Ranking}
\author{Kiriaki Frangias, Andrew Lin, Ellen Vitercik, Manolis Zampetakis}
\author{
	Kiriaki Frangias \\ UC Berkeley \\ \texttt{kfrangias@berkeley.edu} \and 
	Andrew Lin \\ Princeton \\ \texttt{andrewlin@princeton.edu}
    \and
	Ellen Vitercik \\ Stanford \\ \texttt{vitercik@stanford.edu}
  \and
	Manolis Zampetakis \\ Yale \\ \texttt{emmanouil.zampetakis@yale.edu}}
\date{}
\begin{document}

\maketitle

\begin{abstract}
    Ranking is fundamental to many areas, such as search engine optimization, human feedback for language models, as well as peer grading. Crowdsourcing, which is often used for these tasks, requires proper incentivization to ensure accurate inputs. In this work, we draw on the field of \emph{contract theory} from Economics to propose a novel mechanism that enables a \emph{principal} to accurately rank a set of items by incentivizing agents to provide pairwise comparisons of the items. Our mechanism implements these incentives by verifying a subset of each agent's comparisons, a task we assume to be costly. The agent is compensated (for example, monetarily or with class credit) based on the accuracy of these comparisons. Our mechanism achieves the following guarantees: (1) it only requires the principal to verify $O(\log s)$ comparisons, where $s$ is the total number of agents, and (2) it provably achieves higher total utility for the principal compared to ranking the items herself with no crowdsourcing.
\end{abstract}

\section{Introduction}

Ranking is a central task across a wide range of applications, including search engine optimization, human feedback for language models, and peer grading: search engines rank pages by relevance, humans rank text generation based on grammatical correctness, and peer graders rank submissions by quality. Platforms often use crowdsourcing to determine rankings \citep{10.1145/2213836.2213880,Kou2017CrowdsourcedTQ,ghose2012designing}, especially when ranking many alternatives. In many such cases, it is often more feasible and accurate for agents to provide pairwise comparisons than cardinal evaluations \citep{pmlr-v38-shah15,shah2013case}.
However, rational agents will only invest the effort necessary to provide accurate comparisons when properly incentivized.
In many real-world scenarios, these incentives involve a \emph{principal} (the ranking's end-user) verifying a selection of an agent's comparisons. Then, the agent's payment depends on performing these verified comparisons accurately. Applications where ground-truth pairwise comparisons are used to evaluate non-experts' performance include rating search engine optimization \citep{shah2016estimation}, robotics \citep{sadigh2017active}, translation \citep{kreutzer2018reliability}, peer-grading \citep{shah2013case}, reinforcement learning from human feedback \citep{rafailov2023direct}, and medical imaging \citep{orting2017crowdsourced}. 

We propose a novel mechanism that enables a principal to rank a set of items from pairwise comparisons while minimizing the requisite payments and number of verifications, which are costly for the principal. To model the principal and agents, we draw on \emph{contract theory}, a classical field in economics \citep{f7fd804f-0ce0-35f1-9381-72a4444d8df3} that has become increasingly relevant in computer science~\citep[e.g.,][]{kang2019toward,guruganesh2020contracts,duetting2021combinatorial,zhu2023sample}. The principal assigns comparisons to each agent and verifies a subset of these comparisons. The agent does not know the subset's identity.  The principal will compensate each agent for correctly evaluating all verified pairs.

Given the principal's offered payment, the agent decides whether they will exert effort. As in the standard model of contract theory, if the agent exerts effort, they will be ``good'' at the task with some probability and ``bad'' at the task with some smaller probability. If they decide not to exert effort, they will be ``bad'' with certainty. A ``good'' agent will return the correct comparisons, but a ``bad'' agent may not.
When modeling the agents, we introduce heterogeneity through differing values of the per-comparison disutility that they experience. We therefore assume that every agent can potentially be ``good'' but might require more or less compensation to exert effort. As we show, this also allows the principal to incentivize and pay only agents with small disutilities, even though these values are not \emph{a priori} known to the principal.

\paragraph{Key challenges.}
The primary obstacle we face in designing a mechanism for this problem lies in determining how to distribute the comparisons among the agents while satisfying several critical requirements:
\begin{enumerate}[label=(\alph*)]
\item The number of items assigned to each agent (i.e., the subset of items that make up their pairwise comparisons) should be small, minimizing the agent's workload.\label{item:workload}
\item Every comparison must be evaluated by at least one ``good'' agent with high probability.\label{item:redundancy}
\end{enumerate}
In other words, we want each agent's comparisons to cover as few items as possible to satisfy~\ref{item:workload}. However, we need the union of all agents' comparisons to have sufficiently high redundancy to satisfy~\ref{item:redundancy}. These conditions are fundamentally at odds, so satisfying both requires care.
\subsection{Our contributions}

We summarize our primary contributions below.
\begin{enumerate}
\item \textbf{Distributing comparisons.} We provide an algorithm for distributing comparisons among agents while ensuring conditions \ref{item:workload} and \ref{item:redundancy} above are satisfied. 
We uncover a surprising connection to a classic combinatorial problem, the \emph{social golfer problem}, which we use to design our algorithm. If there are $n$ items to be sorted and $s$ agents, each agent is assigned only $\tilde{\mathcal{O}}(n^{3/2}/s)$ items and $\tilde{\mathcal{O}}(n^{3/2}/s)$ comparisons.
\item \textbf{Aggregating the noisy comparisons.} Next, we present an algorithm for identifying ``bad'' agents and aggregating comparisons from ``good'' agents. Our algorithm returns the correct ranking with high probability.
\item \textbf{Determining payments.} We identify the payments necessary to ensure enough agents are ``good'' to recover the correct sorting. This depends on the \emph{disutility} it costs agents to make comparisons. We study a range of settings, from a warm-up where the principal knows the disutilities (as is typical in contract theory), to a setting where the disutilities are drawn from a distribution about which the principal has limited information.
\item \textbf{Principal's utility.}  We show that it suffices for the principal to verify $O(\log{s})$ comparisons to ensure that the correct sorting is recovered. We also identify natural conditions under which, by implementing our proposed contract, the principal achieves higher total utility than she would if she performed all comparisons herself.
\item \textbf{Experiments.} In experiments, we stress-test our results by relaxing our model along several axes. We demonstrate that even when the assumed model is inaccurate, with a small increase in the payment, the principal can recover the correct ranking.
\end{enumerate}

\subsection{Related work}

\paragraph{Rank Aggregation.} Ranking via pairwise comparisons can be considered a subfield of rank aggregation ~\citep{ThurstoneTheMO,Mallows1957NONNULLRM,RePEc:bla:jorssc:v:24:y:1975:i:2:p:193-202,bradley1952rank}. 
Recent work on rank aggregation with pairwise comparisons has built upon these techniques \citep{guiver2009bayesian,lu2011learning,chen2013pairwise,chen2015spectral,tsukida2011analyze, lin2010rank}.
Rank aggregation differs from our work, which models the data sources as agents with differing reliability.

\paragraph{Crowdsourcing.} 
Crowdsourcing has been commonly used in a variety of applications, including sorting \citep{chen2013pairwise,raykar2010learning,10.1145/2806416.2806451}. A central task in crowdsourcing is modeling the reliability of annotators \citep{chen2013pairwise,raykar2010learning}. \citet{chen2013pairwise}, for example, use the Bradley-Terry model for ranking aggregation, assuming that annotators are either perfect or spammers (comparisons are correct with probability 0.5). In contrast, we study a dynamic setting in which annotators can be \emph{incentivized} to exert effort and increase the accuracy of their work. 
\paragraph{Ordinal peer grading.} This area is most related to our work. With the rise of massive online open courses (MOOCs), it has become increasingly important to decentralize the grading process in an efficient manner that comes at a low cost to teaching staff \citep{piech2013tuned}. Peer grading can be a solution to this technical hurdle. Ordinal peer grading (in which students are asked to rank other students' work) has been shown to be an inexpensive, accurate, and robust way to assess performance \citep{shah2013case,barnett2003modern,carterette2008here,stewart2005absolute}, even when tested under real-world data from MOOCs
\citep{mi2015probabilistic,10.1145/2724660.2724678,caragiannis2016effective,raman2014methods}. However, many works on ordinal peer grading draw on probabilistic methods to estimate grader reliability and lack perspective on strategic interactions \citep{10.1145/2724660.2724678,raman2014methods,mi2015probabilistic}. In contrast, works relating to aggregating feedback from strategic data sources focus mostly on regression or cardinal evaluations of items \citep{jurca2006,Jurca2006RobustIF,Jurca_2009,Waggoner_Chen_2014,pmlr-v40-Cai15}. Our work aims to bridge this gap using a contract between the principal and the agents. The contract structures their interaction and leverages their strategic behavior by incentivizing agents to perform accurate comparisons. It also comes at a low cost to the principal while provably retrieving the correct ranking of items.

\section{Warm-up: Known and equal disutilities}

As a warm-up to our main result, we propose a contract assuming that (a) the principal knows the agents' disutilites and (b) every agent experiences the same value of per-comparison disutility (assumptions we relax in Section~\ref{sec:unknown}). We begin with a formal model in Section~\ref{sec:known_disutilities_model} and define the optimal payment in Section \ref{sec:agentsincentives}. In Sections \ref{sec:peergrading} and \ref{sec:allocationfofitems}, we present our main algorithm \textsc{CrowdSort} for computing the correct sorting. 
Section \ref{sec:principalsincentives} identifies conditions under which the principal gets higher utility when implementing our proposed contract than when performing all comparisons herself. The full proofs of all results in this section are in Appendix \ref{appendixa}.

\subsection{Model and notation}\label{sec:known_disutilities_model}

In our model, there is one principal, $s$ agents $ a_1, \dots, a_s$, and a set of $n$ items $T = \{T_1, \dots, T_n\}$. Each item has an unknown ground-truth score that measures some comparable feature that the principal wishes to know. Based on these scores, there is a ground-truth permutation $\mu^*: T \rightarrow [s]$ that sorts the items in ascending order.
If the ground truth score of item $T_j$ is smaller than the ground-truth score of item $T_k$, we denote this as $T_j <_{\mu^*} T_k = \mathbbm{1}\{\mu^*(T_j) < \mu^*(T_k)\}$.
The principal's goal is to find $\mu^*$ 
by leveraging the agents' ability to estimate ground-truth scores and compare items based on that estimate.  However, agents' comparisons can be unreliable.
The principal proposes a contract to each agent, in which agent $a_i$ is assigned pairs of items $C_i = \{(T_k, T_k')\}_{k=1}^{|C_i|}$ and is asked to order each pair. The principal verifies the accuracy of a small subset of these comparisons, but $a_i$ does not know which subset. Through the contract, the principal provides some fixed compensation to each agent that correctly compares all verified pairs of items. If at least one of the verified comparisons is incorrect, the agent is not paid. The principal's goal is to retrieve the ground-truth ordering $\mu^*$ while minimizing agents' payment and the number of verifications.

Each agent $a_i$ has a type $t_i \in$ \{``good'', ``bad''\}, which captures his ability to compare items. The principal does not observe $t_i$.  If the agent is ``good'', then all comparisons he returns are correct with regards to $\mu^*.$
If the agent is ``bad,'' then he chooses a uniformly random permutation $\mu \in M_T$, where $M_T$ is the set of all permutations $\mu : T \to [n]$. He then returns all comparisons to be consistent with that permutation.

An agent's type depends on his strategic behavior, which the principal does not observe. Each agent chooses whether to exert effort when comparing items and commits to this choice upfront. We denote agent $a_i$'s effort using $e_i \in \{0,1\}$. If $a_i$ exerts effort ($e_i=1$), then $\mathbb{P}[t_{i} = \text{``good''} | e_{i} = 1] = \pi$ Otherwise, $\mathbb{P}[t_{i} = \text{``good''} | e_{i} = 0] = 0$. We assume $\pi$ is known to the principal, as is standard in contract theory. 

\begin{remark}
In our experiments, we stress-test our model by examining the number of returned comparisons even when the principal’s estimate of $\pi$ is wrong. Figure 2.b shows that our algorithm is robust to a significant amount of noise in $\pi$, even when the individual value of $\pi$ differs among agents.\end{remark}

The interaction proceeds as follows:
\begin{enumerate}
    \item The principal fixes a payment $p^* \in \mathbb{R}^{+}$ that she will pay agents in order to incentivize them to exert effort.
    \item The principal verifies a subset of the comparisons and announces the size of this set to the agents.
    \item Agents choose whether to exert effort, which fixes their type upfront. 
    \item Agents assigned verified and unverified comparisons.
    \item If agent $a_i$ returns at least one incorrect verified pair, he is identified as ``bad'' and not compensated. We denote $c_{i}=\mathbbm{1}\{a_i$ identified as ``bad''$\}$ and $\hat{\pi} = \mathbb{P}[c_{i}=1 \mid t_{i}=\text{``bad''}]$. (Identifying an agent depends on their type and is independent of their effort conditioned on their type.)
    \item All agents not identified as ``bad'' receive payment $p^*$. 
\end{enumerate}
\begin{remark}
    Agents who perform verified comparisons correctly are not later penalized if conflicting, non-verified comparisons are returned by these agents. In this case, we disregard such comparisons altogether without further implications for the annotators. This allows a margin for human error without affecting agents’ payment.
\end{remark}
The principal obtains utility $u_{\mu^*} \geq 0$, which scales with the number of comparisons returned by agents not identified as ``bad''. She also experiences disutility $\bar{\psi} \geq 0$ from performing each verification.
Meanwhile, agents who exert effort experience a per-comparison disutility of $\psi \geq 0$. In this section, $\psi$ is common among all agents and known to the principal. Section~\ref{sec:unknown} relaxes these assumptions.

Appendix \ref{appendixnotation} includes a table with our notation.

\subsection{Agents' incentives} \label{sec:agentsincentives}

We first determine the smallest amount the principal can pay the agents to incentivize them to exert effort while satisfying individual rationality. We use the notation
 $\mathbb{E}[|C_i|] \leq d$ to denote a bound on the expected number of comparisons each agent $a_i$ is asked to perform.
\begin{restatable}{lemma}{payment}\label{paymentthm}\label{p^*}
    If the principal pays each agent $p^* \geq \frac{d\psi}{\hat{\pi}\pi},$ then every agent is incentivized to exert effort. 
\end{restatable}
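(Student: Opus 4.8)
The plan is to reason about an agent's expected utility as a function of their effort decision and then invoke individual rationality together with incentive compatibility. An agent who exerts effort ($e_i = 1$) pays a per-comparison disutility $\psi$ on each of the (expected) $d$ comparisons assigned, so their expected effort cost is at most $d\psi$. With effort, the agent is ``good'' with probability $\pi$, in which case all comparisons are correct and no verified pair is failed, so the agent is never identified as ``bad'' and collects $p^*$. With probability $1 - \pi$ the agent is ``bad'' despite effort, and is caught with probability $\hat{\pi}$. The plan is to write the expected payment conditional on effort as $p^*\bigl(\pi + (1-\pi)(1-\hat{\pi})\bigr)$ and hence the expected net utility from exerting effort as
\begin{equation*}
U(e_i = 1) \geq p^*\bigl(\pi + (1-\pi)(1-\hat{\pi})\bigr) - d\psi.
\end{equation*}

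Next I would compute the expected utility from not exerting effort. If $e_i = 0$, the agent is ``bad'' with certainty and incurs no disutility, so they are caught with probability $\hat{\pi}$ and earn $p^*(1-\hat{\pi})$ with zero cost:
\begin{equation*}
U(e_i = 0) = p^*(1 - \hat{\pi}).
\end{equation*}
To incentivize effort, I need $U(e_i = 1) \geq U(e_i = 0)$. Subtracting, the condition becomes $p^*\bigl(\pi - (1-\pi)\hat\pi + (1-\pi)(1-\hat\pi) - (1-\hat\pi)\bigr) \ge d\psi$ — more cleanly, the difference in expected payment between the two effort levels is $p^* \cdot \pi\hat{\pi}$ (exerting effort raises the ``good'' probability by $\pi$, and each extra unit of ``good'' probability saves the caught-penalty that occurs with probability $\hat{\pi}$). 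Thus the incentive-compatibility constraint reduces to $p^* \pi \hat{\pi} \geq d\psi$, i.e. $p^* \geq \frac{d\psi}{\hat{\pi}\pi}$, matching the claimed bound. I would then check individual rationality separately: at this payment, $U(e_i=1) \ge U(e_i=0) = p^*(1-\hat\pi) \ge 0$, so the participation constraint is automatically satisfied since payments are nonnegative.

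The main obstacle I anticipate is handling the expectation over $|C_i|$ correctly and being careful about the order of conditioning. The effort cost $\psi |C_i|$ is random, but by linearity its expectation is $\psi\, \mathbb{E}[|C_i|] \le d\psi$, which is where the bound $\mathbb{E}[|C_i|] \le d$ enters. I also want to make sure the event ``identified as bad'' is modeled as stated: the probability of being caught is $\hat\pi$ conditional on being ``bad'' and, crucially, independent of effort given type, so that a ``good'' agent is never penalized. The cleanest presentation treats the type as the only relevant random variable for the payment and the effort only through its influence on $\mathbb{P}[t_i = \text{``good''}]$, which lets the cross terms cancel neatly and yields the clean factor $\pi \hat\pi$ on the left-hand side. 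Once that bookkeeping is set up, the inequality is immediate.
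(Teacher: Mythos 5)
Your proposal is correct and follows essentially the same route as the paper's proof: decompose the expected payment under effort as $p^*\bigl(\pi + (1-\pi)(1-\hat{\pi})\bigr)$, compare against $p^*(1-\hat{\pi})$ for no effort, and reduce the incentive constraint to $p^*\pi\hat{\pi} \geq d\psi$ (your intermediate expanded expression has a spurious $-(1-\pi)\hat{\pi}$ term, but the clean form $p^*\pi\hat{\pi}$ you settle on is right). Your individual-rationality check via the chain $U(e_i{=}1) \geq U(e_i{=}0) = p^*(1-\hat{\pi}) \geq 0$ is slightly slicker than the paper's, which derives and then discharges a separate IR lower bound on $p^*$, but it is not a genuinely different argument.
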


For the remainder of this section, we set \begin{equation}p^* = \frac{d\psi}{\hat{\pi}\pi}\label{eq:p*}\end{equation} so for each agent $a_i$, $t_i =$ ``good'' with probability $\pi$. 

\subsection{Peer grading algorithm} \label{sec:peergrading}

\begin{algorithm}[t]
    \caption{\textsc{CrowdSort}($T, s, v,  r$)}\label{alg:grade}
    \begin{algorithmic}[1]
    \State Let $B= \emptyset$ 
    \State $V =\textsc{findVset}(T, v)$ \Comment{choose comparisons to verify}
    \State $\{W_i\}_{i=1}^s=\textsc{findWset}(T, s, r)$ \Comment{split $T^2$ into subsets}
    \For{$i \in [s]$}
        \State Assign $C_i = V \cup W_i$ to agent $a_i$ and obtain the set of returned comparisons $R_{V}$ and $R_{W_i}$
        \If {there exists incorrect element in $R_V$} 
            \State $B \gets B \cup \{ i\}$
	\EndIf
    \EndFor
    \State \textbf{return} $B$, $\bigcup_{i \in [s]\setminus B}(R_V \cup R_{W_i})$
    
\end{algorithmic}
\end{algorithm}
We next present Algorithm \ref{alg:grade}, which, with high probability, returns all comparisons needed to yield the ground-truth sorting. Algorithm \ref{alg:grade} takes as input the set of items $T$, the number of agents $s$, the number of pairs of items $v = |V|$ to be verified by the principal,
and the number of agents $r$ that each comparison is assigned to. Algorithm \ref{alg:grade} partitions all $\binom{n}{2}$ pairs of items into disjoint subsets $V, \{W_i\}_{i=1}^s \subseteq [n]\times [n]$, by calling algorithms \textsc{findVset} and \textsc{findWset} respectively. We define $V$ as the set of pairwise comparisons that the principal verifies and $W_i$ as the set of unverified comparisons assigned to agent $a_i$. Algorithm \ref{alg:grade} assigns to agent $a_i$ the set $C_i = V \cup W_i$. Each agent returns sets $R_V$ and $R_{W_i}$, which are the pairwise comparisons of the sets $V$ and $W_i$, respectively. If the agent returns an incorrect verified comparison, Algorithm \ref{alg:grade} adds the agent to the set $B$ of ``bad'' agents that have been identified. The algorithm then returns $B$ along with all comparisons returned by agents not identified as ``bad''.

\subsection{Allocation of items} \label{sec:allocationfofitems}

In this section, we describe our algorithms for assigning verified and unverified pairs of items to agents (\textsc{findVset} and \textsc{findWset}, respectively) and show that: (1) for every verified pairwise comparison $(T_i, T_j) \in V$ assigned to a ``bad'' agent, the agent outputs $(T_i<_{\mu} T_j) = 1$ with probability $1/2$, and (2) a ``bad'' agent performs each verified pairwise comparison independently from all others.
In Lemma \ref{ver}, we use these two properties to prove that with high probability, Algorithm \ref{alg:grade} identifies all ``bad'' agents.

\subsubsection{Assigning verified comparisons.}

To guarantee that a ``bad'' agent will be identified with probability $1-1/2^{|V|}$, it suffices to create the sets $V$ and $\{W_i\}_{i=1}^s$ such that each ``bad'' agent can only sample the output of each verified comparison independently from Bernoulli$(1/2)$. We say a verified comparison is \emph{redundant} if a ``bad'' agent does not sample the result of the comparison from Bernoulli($1/2$) conditioned on the realization of all other comparisons in $V$.

\begin{definition}
    Let 
    $E_{i,m, m'}$ be the event that agent $a_i$ with $t_i =$ ``bad'' reports that the ground truth score of item $T_m$ is smaller than that of item $T_{m'}$. 
    We say that $C_i \subseteq T \times T$ contains a \emph{redundant pair} $(T_{j}, T_{j'})$ if: $\mathbb{P} \left[ E_{i,j, j'} \mid E_{i,k, k'}  \; \forall (T_{k}, T_{k'}) \in V \setminus \{(T_{j}, T_{j'})\} \right] \neq \frac{1}{2}.$
\end{definition}

\begin{restatable}{lemma}{redundant}\label{redundant}
    If each item in $V$ appears in exactly one pair, then no pair in $V$ is redundant.
\end{restatable}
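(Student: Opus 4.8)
The plan is to reinterpret the redundancy condition through the randomness of a ``bad'' agent and then exploit a symmetry of the uniform distribution over permutations. Recall that a ``bad'' agent $a_i$ draws $\mu \in M_T$ uniformly at random and reports comparisons consistent with $\mu$, so the event $E_{i,m,m'}$ is exactly $\{\mu(T_m) < \mu(T_{m'})\}$. The hypothesis that every item of $V$ appears in exactly one pair means the pairs in $V$ are \emph{disjoint}: no two share an item. Fixing a target pair $(T_j, T_{j'}) \in V$, I need to show that
\[
\mathbb{P}\!\left[\mu(T_j) < \mu(T_{j'}) \;\middle|\; E_{i,k,k'} \ \forall (T_k,T_{k'}) \in V \setminus \{(T_j,T_{j'})\}\right] = \frac{1}{2}.
\]

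First I would set up a sign-flipping bijection on the sample space of permutations. Define $\sigma: M_T \to M_T$ to be the map that, given $\mu$, swaps the two values $\mu(T_j)$ and $\mu(T_{j'})$ and leaves all other images unchanged; equivalently $\sigma(\mu) = \tau \circ \mu$, where $\tau$ transposes the ranks $\mu(T_j)$ and $\mu(T_{j'})$. This $\sigma$ is an involution, hence a bijection, and since $\mu$ is uniform it is measure-preserving.

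The key step is to verify two properties of $\sigma$. First, it flips the target comparison: $\mu(T_j) < \mu(T_{j'})$ holds if and only if $\sigma(\mu)(T_j) > \sigma(\mu)(T_{j'})$. Second, it preserves every other comparison in $V$: for any $(T_k,T_{k'}) \in V$ disjoint from $\{T_j, T_{j'}\}$, the values $\mu(T_k)$ and $\mu(T_{k'})$ are untouched by $\sigma$, so $E_{i,k,k'}$ holds under $\mu$ if and only if it holds under $\sigma(\mu)$. This second property is precisely where the disjointness hypothesis is used: because each item of $V$ lies in a unique pair, no other pair of $V$ contains $T_j$ or $T_{j'}$, so its two ranks are never among the swapped values.

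With these properties in hand, $\sigma$ restricts to a bijection on the conditioning event $\{E_{i,k,k'} \ \forall (T_k,T_{k'}) \neq (T_j,T_{j'})\}$ that exchanges the sub-event where $E_{i,j,j'}$ holds with the sub-event where it fails. Since $\sigma$ preserves the uniform measure, these two sub-events are equiprobable, so the conditional probability is exactly $\frac{1}{2}$. As $(T_j, T_{j'})$ was arbitrary, no pair of $V$ is redundant. I expect the only delicate point to be the bookkeeping in the second property — confirming that swapping two ranks cannot alter the relative order of any disjoint pair — which becomes immediate once disjointness is invoked.
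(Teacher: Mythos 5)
Your proof is correct and follows essentially the same route as the paper: both arguments reduce the claim to showing that, among permutations consistent with the other (disjoint) verified comparisons, exactly half order $T_j$ before $T_{j'}$. The paper simply asserts this counting fact from disjointness, whereas you make it rigorous by exhibiting the explicit rank-swapping involution $\sigma$ that pairs the two halves — a welcome addition, but the same underlying symmetry argument.
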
 
\begin{proof}[Proof sketch]
We prove that when verified pairs are disjoint, the number of permutations assigning $(T_j <_{\mu} T_j')$ conditioned on all other verified comparisons is exactly $|M_T|/2$. Therefore, no information can be inferred about it.
\end{proof}

Based on Lemma~\ref{redundant}, \textsc{findVset}$(T, v)$ chooses $2v$ arbitrary items and arbitrarily pairs them.
 We next bound the number of verifications sufficient to catch all ``bad'' agents.

\begin{restatable}{lemma}{ver}\label{ver}
If each item in $V$ appears in no more than one pair and $|V| \geq \log \left(2(1-\pi)s/ \delta \right)$, then with probability $1-\frac{\delta}{2}$ over $t_1,...,t_s \sim \textnormal{Bernoulli}(\pi)$, \textsc{CrowdSort}($T, s, |V|, r$) returns  $B = \{i \mid \forall a_i \; \text{s.t.} \; t_i =\text{``bad''}\}.$
\end{restatable}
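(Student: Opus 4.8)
The plan is to prove the two inclusions that together characterize the output set $B$: \emph{soundness} (no ``good'' agent is ever placed in $B$) and \emph{completeness} (every ``bad'' agent is placed in $B$ with high probability). Soundness is immediate from the model: a ``good'' agent returns every comparison correctly, so in particular all of its verified comparisons in $R_V$ agree with $\mu^*$, and the conditional in \textsc{CrowdSort} that adds an agent to $B$ never fires for such an agent. Hence $B$ contains only ``bad'' agents deterministically, and it remains only to show that no ``bad'' agent escapes detection.

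For completeness, I would first argue that for a fixed ``bad'' agent $a_i$ the $|V|$ verified outcomes $\{E_{i,k,k'} : (T_k,T_{k'}) \in V\}$ are mutually independent, each Bernoulli$(1/2)$. Lemma~\ref{redundant} gives that, under the hypothesis that each item of $V$ appears in at most one pair, no pair is redundant; that is, conditioning on the outcomes of all other verified pairs leaves the probability of any single outcome equal to $1/2$. I would then lift this ``conditional-on-all-others'' property to full joint independence: writing the joint law on $\{0,1\}^{|V|}$, non-redundancy says that flipping any single coordinate leaves the probability unchanged, and since any two points of $\{0,1\}^{|V|}$ are joined by a sequence of single-coordinate flips, the joint law must be uniform. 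Consequently a ``bad'' agent matches $\mu^*$ on a given verified pair with probability exactly $1/2$, independently across pairs, so it passes \emph{all} verifications---and thereby evades $B$---with probability exactly $2^{-|V|}$.

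I would then finish with a union bound over the ``bad'' agents. Conditioned on the realized types, the number of ``bad'' agents is fixed and the probability that at least one survives all verifications is at most $(\text{number of ``bad'' agents})\cdot 2^{-|V|}$. Taking expectation over $t_1,\dots,t_s \sim \mathrm{Bernoulli}(\pi)$ replaces this count by its mean $(1-\pi)s$, yielding $\mathbb{P}[\exists\, \text{undetected ``bad'' agent}] \le (1-\pi)s\, 2^{-|V|}$. Substituting the hypothesis $|V| \ge \log\!\big(2(1-\pi)s/\delta\big)$ (base $2$) bounds this by $\delta/2$, so with probability $1-\delta/2$ every ``bad'' agent is caught; combined with soundness, $B$ equals exactly the set of ``bad'' agents.

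The step I expect to be the main obstacle is the passage from Lemma~\ref{redundant}'s pairwise non-redundancy to genuine \emph{mutual} independence of the $|V|$ outcomes: the non-redundancy statement only controls each coordinate conditioned on all the others, and one must verify that this single-flip invariance truly forces the joint distribution to be uniform rather than merely pairwise-balanced. A secondary subtlety is bookkeeping the two sources of randomness---the types $t_i$ and each ``bad'' agent's random permutation---so that the union bound correctly yields the expected count $(1-\pi)s$ rather than the worst-case $s$.
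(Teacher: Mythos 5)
Your proof is correct and follows essentially the same route as the paper's: reduce to the fact that a ``bad'' agent passes all $|V|$ verifications with probability $2^{-|V|}$, then take a union bound over agents weighted by the probability $1-\pi$ of being ``bad''. You are in fact more careful than the paper, which simply asserts i.i.d.\ Bernoulli$(1/2)$ outcomes from Lemma~\ref{redundant}; your single-flip/hypercube-connectivity argument is precisely the missing justification for lifting the conditional-on-all-others property to joint uniformity.
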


\subsubsection{Assigning unverified comparisons.}

We now define \textsc{findWset}.
First, we bound the number of agents each comparison must be assigned to. 

\begin{restatable}{lemma}{redundancy}\label{redundancy}
Suppose $v \geq \log \left(2(1-\pi)s/ \delta \right)$. Let $r \geq \frac{1}{\log(1-\pi)}\log \left( \frac{\delta}{2n^2}\right)$ be the number of agents each comparison is assigned to. Then with probability $1-\delta$, \textsc{CrowdSort}($T, s, v, r$) returns the ground-truth ordering.
\end{restatable}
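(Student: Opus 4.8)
The plan is to reduce the claim to a union bound over two events: the event that all ``bad'' agents are correctly identified, and the event that every one of the $\binom{n}{2}$ comparisons is evaluated by at least one ``good'' agent. If both hold, then \textsc{CrowdSort} has access to the correct comparison for every pair (by definition, a ``good'' agent returns all comparisons correctly with respect to $\mu^*$), and since the comparisons returned by non-identified agents are exactly the comparisons from ``good'' agents, aggregating them recovers the full ordering of all $n$ items, hence $\mu^*$ itself. So the work is in controlling the two failure probabilities and showing each is at most $\delta/2$.

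For the first event, I would invoke Lemma~\ref{ver} directly. Since the hypothesis $v \geq \log(2(1-\pi)s/\delta)$ matches that lemma's requirement (and \textsc{findVset} produces a $V$ in which each item appears in at most one pair, so Lemma~\ref{redundant} applies and no pair is redundant), Lemma~\ref{ver} gives that with probability at least $1-\delta/2$ the returned set $B$ equals exactly the set of ``bad'' agents. This handles identification essentially for free.

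For the second event, I would fix an arbitrary comparison (a pair of items) and consider the $r$ agents to whom \textsc{findWset} assigns it. Each such agent is independently ``good'' with probability $\pi$, so the probability that all $r$ of them are ``bad'' is $(1-\pi)^r$. The choice $r \geq \frac{1}{\log(1-\pi)}\log\!\left(\frac{\delta}{2n^2}\right)$ is calibrated precisely so that $(1-\pi)^r \leq \frac{\delta}{2n^2}$ (note $\log(1-\pi) < 0$, which flips the inequality direction when clearing the denominator, so the stated bound is the correct form). A union bound over the at most $n^2$ comparisons then shows that the probability that \emph{some} comparison is covered only by ``bad'' agents is at most $n^2 \cdot \frac{\delta}{2n^2} = \frac{\delta}{2}$. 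Hence with probability at least $1-\delta/2$ every comparison is covered by a ``good'' agent.

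Combining the two via a union bound yields the full statement with probability at least $1-\delta$. The main subtlety I anticipate is not the probabilistic counting but the logical glue: I must argue that ``every pair covered by a good agent'' plus ``all bad agents removed'' actually suffices to reconstruct $\mu^*$ --- i.e., that the surviving returned comparisons are mutually consistent and complete. This is immediate once we know every returned comparison used in the aggregation comes from a ``good'' agent (correctly identified bad agents are discarded, and good agents never err), since then the retained comparisons are all consistent with $\mu^*$ and span every pair, uniquely determining the permutation. A minor point to state carefully is that \textsc{findWset} must genuinely assign each of the $\binom{n}{2}$ pairs to $r$ distinct agents, so that the independence across the $r$ trials used in the $(1-\pi)^r$ bound is valid; I would note this relies on the construction of \textsc{findWset} (the social-golfer-based allocation) guaranteeing the requisite redundancy $r$ per pair.
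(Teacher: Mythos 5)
Your proposal is correct and follows essentially the same route as the paper's proof: invoke Lemma~\ref{ver} to get identification of all ``bad'' agents with probability $1-\delta/2$, bound the probability that some comparison is assigned only to ``bad'' agents by $n^2(1-\pi)^r \leq \delta/2$ via a union bound, and combine the two events with a final union bound. Your added remarks on the sign of $\log(1-\pi)$ and on why the two events jointly suffice to reconstruct $\mu^*$ are sound and, if anything, slightly more explicit than the paper's treatment.
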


Based on this lemma, we define \begin{equation}r^* = \frac{\log \left( \frac{\delta}{3n^2}\right)}{\log (1-\pi)} \quad \text{ and } \quad v^* = \log \left( \frac{2(1-\pi)s}{ \delta} \right).\label{eq:opt_r_v}\end{equation}

\smallskip
\emph{Algorithm overview.}
\begin{algorithm}[t]

    \caption{\textsc{findWset}($T, s, r$)}\label{alg:findWset}
    \begin{algorithmic}[1]

    \State $\{W_i\}_{i=1}^{s} = \emptyset^s$
    \State $q =$
    \textsc{findPrime}$(\sqrt{|T|})$
    \State Generate a set $\tilde{T}$ of hallucinated items s.t. $|\tilde{T}| = q^2-T$
    \State $\{P_0, \dots, P_{q-1}\} =$ SGP$(T \cup \tilde{T})$
    \State Arbitrarily split agents into $G_0, \dots, G_{s/r}$ groups
    \State $k = 0$
    \For{$\ell = 0, \dots, q-1$} \Comment{iterate over partitions}
        \For{$c=0, \dots, q/r$} \Comment{iterate over subsets}
            \For{$j$ s.t. $a_j \in G_k$}
                \State $W_j \leftarrow \{ (T_i, T_j) \; | \; T_i, T_j \in T \wedge (T_i, T_j) \in (P_\ell)_{c(rn/s)+n} \times (P_\ell)_{c(rn/s)+n}\}_{n=0}^{rn/s-1}$ \Comment{assign to $rn/s$ agents}
            \EndFor
            \State $k \leftarrow k+1$ 
        \EndFor
    \EndFor
    \State Return $(\{W_1, \cdots, W_s\})$
\end{algorithmic}
\end{algorithm}
\textsc{findWset} (Algorithm~\ref{alg:findWset})
assigns each agent $rn/s$ disjoint subsets of items, each containing approximately $\sqrt{n}$ items. The agent is asked for all comparisons within each subset---or in other words, a total ordering of each subset. (Note that to compute a total ordering of a subset, the agent only needs to manually perform $\tilde{O}(\sqrt{n})$ comparisons, leading to a total of $\tilde{O}(r n^{3/2}/s)$ comparisons.)

To create these subsets,
\textsc{findWset} calls the algorithm \textsc{SGP}, which is from the literature on a classic combinatorial problem, the \emph{social golfer problem}. We include algorithm \textsc{SGP} for completeness in Appendix~\ref{appendixa} (Algorithm~\ref{alg:sgp}). \textsc{SGP} generates approximately $\sqrt{n}$ partitions of the items, where each partition consists of approximately $\sqrt{n}$ disjoint subsets of equal size, such that every pair of elements are in the same subset in exactly one partition (Theorem \ref{thm:sgp}).
\textsc{findWset} then round-robin assigns each agent $rn/s$ subsets.

Connecting to the SGP guarantees that each agent will be assigned as few as $O(rn^{3/2}/s)$ items. At the same time, \textsc{SGP} distributes items into subsets such that every pair of items appears in the same subset \emph{exactly once}. This allows the principal to assign each comparison to exactly $r$ agents.

\smallskip
\emph{Details about the social golfer problem.} In the social golfer problem, the goal is to determine whether $m$ golfers can play together in $g$ groups of $p$ (where $m = gp$) for $w$ days in such a way that (1) no two golfers play in the same group more than once, and (2) each golfer plays exactly once each day. Thus, an instance of this problem is defined by integers $(g,p,w).$ For us, the golfers correspond to the items, the days correspond to the partitions, and the groups of golfers correspond to the sets in the partitions. Thus, we aim to solve this problem with $g$, $p$, and $w$ all approximately equal to $\sqrt{n}.$

\smallskip
\emph{Technical details about the algorithm.} A technical hurdle we face is that \textsc{SGP} only returns a solution when the square root of the number of elements is prime (see Theorem \ref{thm:sgp} in Appendix~\ref{appendixa} \citep{SGP}). \textsc{findWset} therefore calls \textsc{findPrime}$(\sqrt{n})$, which returns the smallest prime $q$ such that $q \geq \sqrt{n}$. Then \textsc{findWset} generates a set $\tilde{T}$ of $q^2-n$ hallucinated items which, along with the set of existing items $T$, are passed to \textsc{SGP}. Hallucinated items are used as placeholders and do not affect our procedure or contract because agents are not asked to sort them.  Assuming Cramer's conjecture holds (a well-known conjecture which states that the difference between consecutive primes grows poly-logarithmically), $q = \tilde{O}(\sqrt{n}),$ so this complication comes at minimal cost.
We now bound the number of items and comparisons assigned to each agent. 
\begin{restatable}{lemma}{numcomps} \label{number of comparisons}
Assuming Cramer's conjecture, each agent is assigned $\tilde{\mathcal{O}}(n^{3/2} / s)$ items and $\tilde{\mathcal{O}}(n^{3/2} / s)$ comparisons. 
\end{restatable}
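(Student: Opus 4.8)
The plan is to trace through \textsc{findWset} and count the items and comparisons assigned to each agent, invoking Cramer's conjecture only to control the prime-rounding overhead. First I would set up the basic quantities. By the call to \textsc{findPrime}$(\sqrt{n})$, we obtain the smallest prime $q \geq \sqrt{n}$, and under Cramer's conjecture the gap between consecutive primes is poly-logarithmic, so $q = \sqrt{n} + \tilde{\mathcal{O}}(1) = \tilde{\mathcal{O}}(\sqrt{n})$. The algorithm then pads the item set with hallucinated items to size $q^2 = \tilde{\mathcal{O}}(n)$ and runs \textsc{SGP} on $q^2$ items. By Theorem~\ref{thm:sgp}, \textsc{SGP} produces $q$ partitions, each consisting of $q$ disjoint subsets of size exactly $q$.

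Next I would count the items per agent. Examining the innermost assignment in \textsc{findWset}, each agent $a_j$ is assigned the subsets $W_j$ built from the blocks of the SGP partitions, and the loop structure round-robins $rn/s$ subsets to each agent, where each subset has size $q = \tilde{\mathcal{O}}(\sqrt{n})$. Multiplying, each agent receives $\frac{rn}{s} \cdot \tilde{\mathcal{O}}(\sqrt{n}) = \tilde{\mathcal{O}}\!\left(\frac{r n^{3/2}}{s}\right)$ items. Since the lemma statement fixes $r$ to the constant-order value $r^*$ from Equation~\eqref{eq:opt_r_v} (which is $\tilde{\mathcal{O}}(1)$ in $n$, depending only logarithmically on $n$, $s$, and $\delta$), this simplifies to $\tilde{\mathcal{O}}(n^{3/2}/s)$ items per agent, absorbing the logarithmic factor from $r$ into the $\tilde{\mathcal{O}}$ notation.

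For the comparison count, I would use the observation already flagged in the algorithm overview: the agent is asked for a total ordering of each assigned subset, and to produce a total ordering of a set of size $\tilde{\mathcal{O}}(\sqrt{n})$ the agent need only perform $\tilde{\mathcal{O}}(\sqrt{n})$ manual comparisons (e.g.\ via an optimal comparison-based sorting routine, which uses $\mathcal{O}(m \log m)$ comparisons on $m$ elements). Summing over the $rn/s$ subsets per agent gives $\frac{rn}{s} \cdot \tilde{\mathcal{O}}(\sqrt{n}) = \tilde{\mathcal{O}}(n^{3/2}/s)$ comparisons, and again substituting $r = r^*$ yields the claimed $\tilde{\mathcal{O}}(n^{3/2}/s)$ bound.

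The main obstacle I anticipate is not the arithmetic but verifying that the hallucinated items do not inflate the true workload: the padding enlarges the SGP instance to $q^2$ items, but since agents are never asked to sort hallucinated items, I must confirm that the effective subset sizes and comparison counts are bounded in terms of the genuine items. The cleanest way to handle this is to note that each subset has size at most $q = \tilde{\mathcal{O}}(\sqrt{n})$ regardless of how many of its elements are hallucinated, so the \emph{worst-case} per-subset cost is $\tilde{\mathcal{O}}(\sqrt{n})$ comparisons, and hallucinated elements can only reduce the number of comparisons actually requested. Thus the upper bound is preserved. The only genuine dependence on Cramer's conjecture is in guaranteeing $q = \tilde{\mathcal{O}}(\sqrt{n})$ rather than something larger; without it the gap bound would be weaker and the $\tilde{\mathcal{O}}$ factors could degrade.
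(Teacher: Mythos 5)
Your proposal is correct and follows essentially the same route as the paper's proof: bound $q=\tilde{\mathcal{O}}(\sqrt n)$ via Cramer's conjecture, count $rn/s$ subsets of size $q$ per agent for the item bound, and use an $\mathcal{O}(q\log q)$ adaptive sort per subset for the comparison bound, with the logarithmic $r$ absorbed into $\tilde{\mathcal{O}}$. Your explicit remark that hallucinated items can only reduce the requested comparisons is a small clarification the paper leaves implicit, but it does not change the argument.
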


\subsection{Principal's incentives} \label{sec:principalsincentives}

We synthesize our results in the following theorem, in which we additionally prove that if the principal's disutility $\bar{\psi}$ is sufficiently larger than the agents' disuility $\psi$, then it is in the principal's best interest to use our mechanism rather than perform all comparisons herself. This is a natural assumption, which implies that the principal would have to be compensated sufficiently more per comparison than the agents for their work to be profitable to the principal.

\begin{restatable}{theorem}{bigthm} \label{bigthm}
Suppose that the principal pays agents $p^{*}$ and runs \textsc{CrowdSort($T, s, v^{*}, r^{*}$)}. Then:
\begin{enumerate}
    \itemsep0em 
    \item Assuming Cramer's conjecture, each agent performs $\tilde{\mathcal{O}}(n^{3/2}/s)$ comparisons.
\item  The principal verifies $\mathcal{O}(\log{(s/\delta)})$ comparisons.
\item With probability $1-\delta$, the correct ranking is retrieved.
\item Assume that $\delta < 1/s^2$ and $\pi$ is bounded away from $1$. If $\bar{\psi} /\psi = \Omega(\sqrt{n}\log{n})$,  then the principal's expected utility is at least the utility she would get if she were to perform all comparisons on her own.
\end{enumerate}

\end{restatable}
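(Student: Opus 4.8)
The plan is to prove the four claims by assembling the lemmas already established, with the bulk of the genuinely new work concentrated in part~4, the utility comparison. For parts~1--3 I would simply invoke the earlier results at the prescribed parameter settings. Part~1 follows directly from Lemma~\ref{number of comparisons}, since substituting $r = r^*$ gives $r^* = \tilde{\mathcal{O}}(1)$ once $\pi$ is bounded away from $1$ (so that $\log(1-\pi)$ is a constant), reducing the $\tilde{\mathcal{O}}(r n^{3/2}/s)$ bound to $\tilde{\mathcal{O}}(n^{3/2}/s)$. Part~2 follows from the definition $v^* = \log(2(1-\pi)s/\delta) = \mathcal{O}(\log(s/\delta))$, since the principal verifies $|V| = v^*$ comparisons. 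Part~3 is exactly Lemma~\ref{redundancy} applied with $v = v^*$ and $r = r^*$, which guarantees recovery of the ground-truth ordering with probability $1-\delta$.

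The substance is part~4, so I would set up the two utilities explicitly. First I would write the principal's expected utility under the mechanism. She pays $p^* = d\psi/(\hat\pi\pi)$ to each agent not identified as ``bad'', incurs verification disutility $\bar\psi$ on each of the $v^*$ verified comparisons, and collects $u_{\mu^*}$, which scales with the number of returned comparisons. Here $d = \mathbb{E}[|C_i|] = \tilde{\mathcal{O}}(n^{3/2}/s)$ from part~1, and there are $s$ agents, so the total expected payment is $s p^* = \tilde{\mathcal{O}}(n^{3/2}) \cdot \psi/(\hat\pi\pi)$. The self-ranking alternative requires the principal to perform all $\binom{n}{2} = \Theta(n^2)$ comparisons herself, each at disutility $\bar\psi$, for a total cost of $\Theta(n^2)\bar\psi$ (while obtaining the same retrieval benefit $u_{\mu^*}$, since both routes recover $\mu^*$). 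The claim reduces to showing that the mechanism's total cost is no greater than $\Theta(n^2)\bar\psi$, i.e.
\begin{equation}
s p^* + v^* \bar\psi \;\le\; \binom{n}{2}\bar\psi.
\end{equation}

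The main obstacle will be managing the constants and logarithmic factors hidden in the $\tilde{\mathcal{O}}$ to confirm that the stated ratio $\bar\psi/\psi = \Omega(\sqrt{n}\log n)$ is exactly what makes this inequality hold. I would argue as follows: the payment term is $sp^* = \tilde{\mathcal{O}}(n^{3/2})\,\psi$, so dividing the target inequality by $\bar\psi$ and isolating $\psi/\bar\psi$, it suffices that $\psi/\bar\psi = \tilde{\mathcal{O}}(n^{1/2})^{-1}$, i.e. $\bar\psi/\psi = \Omega(\sqrt n \log n)$, which absorbs the polylogarithmic slack from $d$; the verification term $v^*\bar\psi = \mathcal{O}(\log(s/\delta))\bar\psi$ is lower order than $\binom{n}{2}\bar\psi = \Theta(n^2)\bar\psi$ and is easily dominated, using $\delta < 1/s^2$ to bound $\log(s/\delta) = \mathcal{O}(\log s)$. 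The one subtlety to handle carefully is that $u_{\mu^*}$ appears on both sides and should be shown to be at least as large under the mechanism as under self-ranking (both recover all comparisons with high probability, and the $1-\delta$ failure probability contributes a negligible correction to the expectation given $\delta < 1/s^2$); I would note that the $\delta$-probability failure event changes the expected utility by a vanishing additive amount, so the comparison of deterministic cost terms suffices. Tracking these polylog factors to land precisely on the $\sqrt n \log n$ threshold is where the care is required.
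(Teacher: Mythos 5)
Parts 1--3 of your argument match the paper's (they are direct invocations of Lemmas~\ref{number of comparisons}, \ref{ver}, and \ref{redundancy}), but part 4 contains a genuine error in the baseline, and this error is exactly where the $\sqrt{n}\log n$ threshold lives. You charge the self-ranking principal $\binom{n}{2}\bar\psi = \Theta(n^2)\bar\psi$ for performing ``all comparisons herself.'' In the paper's model the principal sorts adaptively, so she only performs the $2n\log n$ comparisons expected of quicksort while still obtaining the full benefit $\lambda\binom{n}{2}$ by transitivity; her baseline utility is $\lambda\binom{n}{2} - 2\bar\psi n\log n$, not $\lambda\binom{n}{2} - \Theta(n^2)\bar\psi$. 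Because your baseline is \emph{lower} than the true one, showing the mechanism beats it does not establish the theorem. Moreover, your own arithmetic does not land where you claim: with total payment $sp^* = \tilde{\mathcal{O}}(n^{3/2})\psi$ and a cost budget of $\Theta(n^2)\bar\psi$, isolating the ratio gives $\bar\psi/\psi = \tilde{\Omega}(n^{3/2}/n^2) = \tilde{\Omega}(1/\sqrt{n})$, an essentially vacuous condition --- not $\Omega(\sqrt{n}\log n)$. The correct comparison is
\begin{equation*}
\lambda\delta\binom{n}{2} + s\,p^* + \bar\psi\,v^* \;\lesssim\; 2\bar\psi\, n\log n,
\end{equation*}
and since $s p^* = s\,d\psi/(\hat\pi\pi) = \mathcal{O}(n^{3/2}\log^2 n)\,\psi$ (with $\pi$ bounded away from $1$ and $\hat\pi$ close to $1$), dividing by the $n\log n$ budget yields precisely $\bar\psi/\psi = \Omega(\sqrt{n}\log n)$; the $\delta < 1/n^2$ (equivalently $1/s^2$ in the $s=n$ regime) assumption makes the $\lambda\delta\binom{n}{2}$ loss term negligible, as you anticipated. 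Your treatment of the other ingredients (the expected payment to unidentified ``bad'' agents, the lower-order verification cost, and the failure-probability correction to $u_{\mu^*}$) is consistent with the paper's proof.
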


\section{Agents with unknown disutilites}\label{sec:unknown}
We now significantly expand our model to study heterogeneous agents with differing, unknown disutilities. The full proofs of all results in this section are in Appendix \ref{appendixb}.

\subsection{Model and notation}

In this section, agent $a_i$ experiences per-comparison disutility $\psi_i \in \mathbb{R}^{+}$ while exerting effort. By allowing for different levels of disutility, we capture the differing levels of difficulty agents may face in comparing items. We assume that $\psi_1, \dots, \psi_s$ are drawn i.i.d. from some distribution $\mathcal{D}$ with CDF $F$. The realized disutility of each agent is known to the agent but unknown to the principal and all other agents. The principal has access to an estimate of the distribution $\mathcal{D}$ by way of a function $\hat{F}(\psi)$ such that $\lVert\hat{F}-F\rVert_{\infty}\leq\epsilon$. The function $\hat{F}$ could be the empirical CDF based on $O(1/\epsilon^2)$ samples, for example, as per the Dvoretzky–Kiefer–Wolfowitz inequality. Without loss of generality, we assume $\psi_1 < \psi_2 < \cdots < \psi_s$ and make no further distributional assumptions. 

Some agents might experience high disutility when exerting effort, so it may not be in the principal's best interest to incentivize all agents to exert effort. 
For the rest of this section, we denote by $g$ the number of agents that the principal aims to incentivize; in Theorem \ref{g*}, we describe an efficient optimization problem that solves for the optimal value of $g$, denoted $g^*$. The principal controls the number of incentivized agents only through the payment function. Hence, the realized number of agents that exert effort, denoted $\hat{g}$, is a random variable with a distribution that depends on the payment. Under this model, the principal will:
\begin{enumerate}
    \itemsep0em 
    \item Calculate $g^*$, the optimal number incentivized agents.
    \item Announce the payment $p_{g^*} \in \mathbb{R}^{+}$ (based on $g^*$).
    \item Verify a subset of comparisons assigned to the agents.
    \item Assign each comparison to a total of $r$  agents.
    \item Pay $p_{g^*}$ to all agents not identified as ``bad.''
\end{enumerate}

\subsection{Payment function}
 The following analysis provides a lower bound on the payment $p_g$ required so that approximately $g$ agents exert effort while satisfying individual rationality. We denote the payment function as $p_i(c_i) = p_g \mathbbm{1}\{c_i \neq 1\}$.
\begin{restatable}{lemma}{paymentunk}\label{paymentunk}
    Suppose $p_g \geq \frac{d\psi_i}{\hat{\pi}\pi}$, where $d$ denotes an upper bound on the expected number of comparisons performed by an agent. Then agent $a_i$ is incentivized to exert effort. 
\end{restatable}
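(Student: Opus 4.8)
The plan is to set up the agent's expected-utility comparison between exerting effort and not exerting effort, exactly as in the warm-up Lemma~\ref{paymentthm}, and to verify that the individual-rationality and incentive-compatibility conditions still hold for an arbitrary agent $a_i$ once we substitute the agent-specific disutility $\psi_i$ in place of the common $\psi$. First I would write out the agent's expected utility under $e_i = 1$. If agent $a_i$ exerts effort, he incurs a disutility of $\psi_i$ per comparison, so his expected total disutility is $\psi_i \cdot \mathbb{E}[|C_i|] \leq d\psi_i$. He is paid $p_g$ unless he is identified as ``bad'' (i.e.\ $c_i = 1$). The probability he avoids being flagged is the probability he is not identified, which factors through his type: conditioning on effort, $\mathbb{P}[t_i = \text{``good''} \mid e_i = 1] = \pi$, a ``good'' agent answers all verified pairs correctly and is never flagged, while a ``bad'' agent is flagged with probability $\hat\pi$. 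Hence the probability of being paid is at least $\pi$, and more carefully the expected payment is $p_g(\pi + (1-\pi)(1-\hat\pi))$; keeping the conservative bound $\pi$ suffices. So $\mathbb{E}[\text{utility} \mid e_i = 1] \geq \pi \cdot p_g - d\psi_i$ (the relevant lower bound mirroring the warm-up).

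Next I would compute the utility under $e_i = 0$. With no effort the agent is ``bad'' with certainty, incurs zero disutility, and is paid only if he is not flagged, which happens with probability $1-\hat\pi$, giving $\mathbb{E}[\text{utility} \mid e_i = 0] = (1-\hat\pi)p_g$. Incentive compatibility requires $\mathbb{E}[\text{utility}\mid e_i=1] \geq \mathbb{E}[\text{utility}\mid e_i=0]$. I would then carry out the same algebraic rearrangement as in the warm-up: the effort condition reduces to $p_g \cdot \hat\pi \pi \geq d\psi_i$, i.e.\ $p_g \geq \frac{d\psi_i}{\hat\pi\pi}$, which is precisely the hypothesis. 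Individual rationality (nonnegative utility from participating and exerting effort) follows from the same inequality since $\pi p_g - d\psi_i \geq 0$ under the stated bound. Because the argument is agent-by-agent and depends only on $a_i$'s own disutility $\psi_i$ and the globally known parameters $\pi, \hat\pi, d$, it transfers verbatim from Lemma~\ref{paymentthm} with $\psi \mapsto \psi_i$.

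The main thing to be careful about—rather than a genuine obstacle—is that in this heterogeneous setting a \emph{fixed} payment $p_g$ is offered to all agents, yet the disutilities $\psi_i$ differ. The lemma as stated only asserts that if $p_g \geq d\psi_i/(\hat\pi\pi)$ then that particular agent $a_i$ is incentivized; it does not claim all agents are. This is exactly what we want: since the $\psi_i$ are sorted increasingly, a given threshold $p_g$ incentivizes precisely the prefix of agents whose disutility is small enough, which is the mechanism by which the principal later selects roughly $g$ agents. So the proof should be phrased for a single arbitrary $a_i$ and should \emph{not} attempt any union or uniform statement over all agents. The remaining subtlety is that $\mathbb{E}[|C_i|] \leq d$ is an expectation (the comparison count is itself random through the group assignment in \textsc{findWset}); since disutility is linear in the number of comparisons, linearity of expectation lets me replace the random $|C_i|$ by its bound $d$ cleanly, so no concentration argument is needed here.
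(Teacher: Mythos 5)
Your proposal is correct and follows essentially the same route as the paper: write $\mathbb{E}[u_i \mid e_i=1] = p_g\bigl(\pi + (1-\pi)(1-\hat\pi)\bigr) - \mathbb{E}[|C_i|]\psi_i$ and $\mathbb{E}[u_i \mid e_i=0] = p_g(1-\hat\pi)$, rearrange to $p_g\hat\pi\pi \geq d\psi_i$, and observe that individual rationality is implied; the argument is per-agent with $\psi \mapsto \psi_i$, exactly as in the paper. One small caution: your aside that ``keeping the conservative bound $\pi$ suffices'' is not quite right---bounding the payment probability below by $\pi$ would yield the condition $p_g(\pi - 1 + \hat\pi) \geq d\psi_i$, which is strictly stronger than the stated threshold since $\pi - 1 + \hat\pi \leq \hat\pi\pi$---but this does not affect your proof because you ultimately use the exact expression.
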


Given $d$, the number of agents incentivized to exert effort depends solely on the disutilities $\psi_i$ of each agent, so the approximate distribution of the number of agents exerting effort is known to the principal and can be used to determine the payment function. In the following lemma, we upper bound the probability that an agent exerts effort, which implies an upper bound on the total payment.
In the following lemma, we denote $\tilde{F}$ to be the quantile function with respect to $F$, i.e $\tilde{F}(p) = \inf\{x \in \mathbb{R} : p\leq F(x)\}$.

\begin{restatable}{lemma}{numincentivizedagents} \label{num incentivized agents}
Suppose the principal decides to incentivize $g$ agents. If the principal sets $p_g \leq \frac{d}{\hat{\pi}\pi}\tilde{F}_{N}\left(\frac{g}{s} \right)$, then the probability that any agent will exert effort is at most $g/s +\epsilon$, or in other words, $\Pr_{\psi_i \sim \mathcal{D}}[\psi_i \geq p_g] \leq g/s +\epsilon.$
\end{restatable}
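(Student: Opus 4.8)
The plan is to connect the payment bound directly to the disutility threshold, and then use the closeness of $\hat{F}$ to $F$ to control the probability. First I would recall from Lemma~\ref{paymentunk} that agent $a_i$ exerts effort precisely when $p_g \geq \frac{d \psi_i}{\hat{\pi}\pi}$, which rearranges to $\psi_i \leq \frac{\hat{\pi}\pi}{d} p_g$. Hence the event ``$a_i$ exerts effort'' is exactly the event $\{\psi_i \leq \frac{\hat{\pi}\pi}{d} p_g\}$, and since $\psi_i \sim \mathcal{D}$, its probability is $F\!\left(\frac{\hat{\pi}\pi}{d} p_g\right)$. The goal is therefore to show this quantity is at most $g/s + \epsilon$ under the hypothesis $p_g \leq \frac{d}{\hat{\pi}\pi}\tilde{F}_N\!\left(\frac{g}{s}\right)$.

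Next I would substitute the hypothesized bound on $p_g$. Plugging in gives $\frac{\hat{\pi}\pi}{d}p_g \leq \tilde{F}_N\!\left(\frac{g}{s}\right)$, so by monotonicity of $F$ we get
\[
\Pr_{\psi_i \sim \mathcal{D}}[\psi_i \leq \tfrac{\hat{\pi}\pi}{d}p_g] \;=\; F\!\left(\tfrac{\hat{\pi}\pi}{d}p_g\right) \;\leq\; F\!\left(\tilde{F}_N\!\left(\tfrac{g}{s}\right)\right).
\]
The remaining task is to bound $F(\tilde{F}_N(g/s))$ by $g/s + \epsilon$. Here $\tilde{F}_N$ should be the quantile function associated with the \emph{estimated} CDF $\hat{F}$ (the subscript $N$ presumably flags the empirical/estimated version), so I would use the defining property of a quantile function together with the approximation guarantee $\lVert \hat{F} - F\rVert_\infty \leq \epsilon$. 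Concretely, by the definition of the quantile inverse one has $\hat{F}(\tilde{F}_N(g/s)) \leq g/s$ (or is at most $g/s$ up to the usual right-continuity subtlety), and then the uniform bound gives $F(\tilde{F}_N(g/s)) \leq \hat{F}(\tilde{F}_N(g/s)) + \epsilon \leq g/s + \epsilon$. Chaining these inequalities yields the claim.

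I expect the main obstacle to be the careful handling of the quantile function and the direction of the inequality it supplies. Quantile (generalized inverse) functions satisfy $F(\tilde{F}(p)) \geq p$ in general, not $\leq p$, because of jumps and flat regions in $F$; the clean bound $\hat{F}(\tilde{F}_N(p)) \leq p$ requires invoking the correct Galois-connection inequality for generalized inverses and being precise about whether we use $\leq$ or $<$ in the effort condition and in the $\inf$ defining $\tilde{F}_N$. I would therefore state the relevant property of the generalized inverse explicitly (that $\tilde{F}_N(p) = \inf\{x : p \leq \hat{F}(x)\}$ implies, by right-continuity of a CDF, $\hat{F}(\tilde{F}_N(p)) \geq p$ but also that the mass strictly below the threshold is controlled), and reconcile it with whether the effort event uses a weak or strict inequality. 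Once the quantile bookkeeping is pinned down, the only other ingredient is the one-line application of the DKW-type $\epsilon$-closeness, and the rest is immediate monotonicity and substitution.
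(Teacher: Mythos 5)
Your proposal is correct and takes essentially the same route as the paper: it characterizes the effort event via Lemma~\ref{paymentunk} as $\psi_i \leq \hat{\pi}\pi p_g/d$, writes its probability as $F$ of that threshold, and chains monotonicity, the quantile function of $\hat{F}$, and the bound $\lVert \hat{F}-F\rVert_\infty \leq \epsilon$ (the paper merely applies the $\epsilon$-closeness at the point $\hat{\pi}\pi p_g/d$ rather than at $\tilde{F}_N(g/s)$, which is an immaterial reordering). If anything, you are more careful than the paper, which silently assumes $\hat{F}(\tilde{F}_N(g/s)) \leq g/s$ without addressing the generalized-inverse boundary subtlety you flag.
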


For the remainder of this section, we set \begin{equation}
p_g = \frac{d}{\hat{\pi}\pi}\tilde{F}_{N}\left(\frac{g}{s}\right).
\label{payment}\end{equation}

\begin{remark}\label{ub}
    Notice that if Equation~\eqref{payment} holds, the probability that any agent exerts effort is $\frac{g}{s}-\epsilon  \leq F\left(\frac{p_g\hat{\pi}\pi}{d}\right) \leq \frac{g}{s}+\epsilon.$
\end{remark}

\subsection{Correctness of \textsc{CrowdSort}}
After choosing the payment function $p_g$ the principal will choose to verify a subset of the binary comparisons, $V$. The size of this set $v_g = |V|$ must be such that with high probability, all agents of type ``bad'' are identified by algorithm \textsc{CrowdSort}($T, s, v_g, r_g$). The following lemma provides a sufficient lower bound on the value of $v_g$.

\begin{restatable}{lemma}{versecond} \label{ver2}
Suppose the principal chooses to verify a set of pairs of items, V, such that each item in V appears in no more than one pair. Let $v_g \geq \log_2 \left( 2(s-\pi g +\pi s \epsilon)/\delta \right)$ and $p_{g}$ be the payment function, set according to Equation~\eqref{payment}. Then with probability $1-\frac{\delta}{2}$ over the types of the agents, $t_1,...,t_s$, \textsc{CrowdSort}($T, s, v_g, r_g$) returns  $B = \{i | t_i =$``bad''$\}$. 
\end{restatable}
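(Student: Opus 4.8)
The plan is to mirror the structure of the warm-up result Lemma~\ref{ver}, adapting the counting argument to account for the heterogeneous, unknown disutilities. The core logic is unchanged: a ``bad'' agent samples each verified comparison independently from $\mathrm{Bernoulli}(1/2)$ (guaranteed by Lemma~\ref{redundant}, since the verified pairs in $V$ are disjoint), so the probability that a single fixed ``bad'' agent passes all $v_g$ verified comparisons by chance is exactly $2^{-v_g}$. The algorithm fails to identify all ``bad'' agents precisely when at least one ``bad'' agent slips through, so I would bound the failure probability by a union bound over the expected number of ``bad'' agents and then show $v_g \geq \log_2(2(s - \pi g + \pi s \epsilon)/\delta)$ forces this bound below $\delta/2$.

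Concretely, first I would condition on the agents' types. For a fixed ``bad'' agent $a_i$, the event that \textsc{CrowdSort} fails to flag him is $\Pr[c_i = 0 \mid t_i = \text{``bad''}] = 2^{-v_g}$, using disjointness of $V$ and the independence established in Lemma~\ref{redundant}. Let $N_{\mathrm{bad}}$ denote the number of ``bad'' agents. By a union bound, the probability that \emph{some} ``bad'' agent escapes detection is at most $N_{\mathrm{bad}} \cdot 2^{-v_g}$. Taking expectations and applying the tower rule, the overall failure probability is at most $\mathbb{E}[N_{\mathrm{bad}}] \cdot 2^{-v_g}$.

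The step that differs from the warm-up, and the main obstacle, is bounding $\mathbb{E}[N_{\mathrm{bad}}]$ under the new model. In the warm-up, every agent was ``good'' with probability exactly $\pi$, giving $\mathbb{E}[N_{\mathrm{bad}}] = (1-\pi)s$. Here an agent is ``good'' only if he both exerts effort and then succeeds, and whether he exerts effort depends on his realized disutility $\psi_i$ relative to the payment $p_g$. By Remark~\ref{ub}, the probability any individual agent exerts effort is at most $g/s + \epsilon$; since he is then ``good'' with probability $\pi$ conditioned on effort, the probability a given agent is ``good'' is at most $\pi(g/s + \epsilon)$. Hence each agent is ``bad'' with probability at least $1 - \pi(g/s + \epsilon)$, so by linearity $\mathbb{E}[N_{\mathrm{bad}}] \leq s\bigl(1 - \pi(g/s + \epsilon)\bigr) = s - \pi g + \pi s \epsilon$. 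I would need to be careful that this expectation is over the randomness in both the disutilities $\psi_i \sim \mathcal{D}$ and the type-realization conditioned on effort, and that the per-agent ``bad'' probabilities are bounded uniformly so linearity of expectation applies cleanly despite the $\psi_i$ being only approximately controlled through $\hat F$.

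Finally, I would substitute the stated threshold. Plugging $v_g \geq \log_2\bigl(2(s - \pi g + \pi s \epsilon)/\delta\bigr)$ gives $2^{-v_g} \leq \delta/\bigl(2(s - \pi g + \pi s \epsilon)\bigr)$, so the failure probability is at most $(s - \pi g + \pi s \epsilon) \cdot \delta/\bigl(2(s - \pi g + \pi s \epsilon)\bigr) = \delta/2$. Therefore with probability at least $1 - \delta/2$ over the types $t_1, \dots, t_s$, every ``bad'' agent returns at least one incorrect verified comparison and is added to $B$, while every ``good'' agent trivially passes all verifications, yielding exactly $B = \{i : t_i = \text{``bad''}\}$ as claimed.
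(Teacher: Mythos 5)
Your overall strategy matches the paper's proof exactly: use Lemma~\ref{redundant} to get $\Pr[c_i = 0 \mid t_i = \text{``bad''}] = 2^{-v_g}$, union-bound over the agents weighted by the per-agent probability of being ``bad,'' and solve for $v_g$. However, there is a concrete error in the one step that genuinely differs from the warm-up: the bound on $\Pr[t_i = \text{``bad''}]$. Remark~\ref{ub} sandwiches the effort probability as $g/s - \epsilon \leq \Pr[e_i = 1] \leq g/s + \epsilon$. Since $\Pr[t_i = \text{``bad''}] = 1 - \pi\,\Pr[e_i=1]$, upper-bounding the ``bad'' probability requires the \emph{lower} bound $\Pr[e_i=1] \geq g/s - \epsilon$, which gives $\Pr[t_i = \text{``bad''}] \leq 1 - \pi g/s + \pi\epsilon$ and hence $\mathbb{E}[N_{\mathrm{bad}}] \leq s - \pi g + \pi s \epsilon$, as in the paper. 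You instead used the upper bound $g/s + \epsilon$, which only yields $\Pr[t_i = \text{``bad''}] \geq 1 - \pi(g/s + \epsilon)$ --- a lower bound on the ``bad'' probability, which says nothing about an upper bound on $\mathbb{E}[N_{\mathrm{bad}}]$. Your subsequent algebra also does not check out: $s\bigl(1 - \pi(g/s + \epsilon)\bigr) = s - \pi g - \pi s \epsilon$, not $s - \pi g + \pi s \epsilon$; the two mistakes cancel to land on the correct final expression, but the derivation as written is invalid. Swapping to the lower bound from Remark~\ref{ub} fixes the argument, and everything else in your proposal (the independence from Lemma~\ref{redundant}, the union bound, and the substitution of the threshold on $v_g$) is correct and identical to the paper's route.
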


Next, we bound the redundancy required to ensure that \textsc{CrowdSort} returns the ground-truth ordering.
\begin{restatable}{lemma}{redundancysecond} \label{redundancy2}
Assume $v_g \geq \log_2 \left(2(s-\pi g +\pi s \epsilon)/\delta \right)$ and let $r_g \geq \frac{\log_2 \left( \delta/(2n^2)\right)}{\log_2 \left( 1-\pi g/s +\pi \epsilon\right)}$ be the number of agents each comparison is assigned to. With probability $1-\delta$, \textsc{CrowdSort}($T, s, v_g, r_g$) returns the ground-truth sorting.
\end{restatable}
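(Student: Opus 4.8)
The plan is to prove Lemma~\ref{redundancy2} by bounding the probability that some comparison $(T_i, T_j)$ is never assigned to any ``good'' agent, and then taking a union bound over all $\binom{n}{2} \le n^2$ comparisons. By Lemma~\ref{ver2}, the choice of $v_g$ guarantees that with probability $1 - \delta/2$ every ``bad'' agent is correctly identified and placed in $B$, so the returned comparisons come only from agents that exerted effort and turned out ``good.'' Conditioned on this event, recovering the ground-truth sorting reduces to the combinatorial guarantee that every pair of items is covered by at least one ``good'' agent: if each comparison is reported correctly by some good agent, then aggregating these comparisons yields a consistent total order equal to $\mu^*$.

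First I would fix an arbitrary comparison and recall that \textsc{findWset} assigns it to exactly $r_g$ distinct agents (this is the key structural property inherited from the social golfer construction, where every pair of items lands in a common subset exactly once). The event that this comparison is ``uncovered'' is the event that all $r_g$ agents assigned to it are \emph{not} good. By Remark~\ref{ub}, each agent exerts effort with probability at most $g/s + \epsilon$, and conditioned on exerting effort is good with probability $\pi$; hence any fixed agent is good with probability at most $\pi(g/s + \epsilon)$, so the probability it is \emph{not} good is at least $1 - \pi g/s + \pi\epsilon$. Since the types of distinct agents are independent, the probability that all $r_g$ assigned agents fail to be good is at most $\left(1 - \pi g/s + \pi \epsilon\right)^{r_g}$.

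Next I would impose that this per-comparison failure probability be at most $\delta/(2n^2)$, i.e. require
\begin{equation}
\left(1 - \pi g/s + \pi\epsilon\right)^{r_g} \le \frac{\delta}{2n^2}. \notag
\end{equation}
Taking logarithms (base $2$) and solving for $r_g$, noting that $\log_2\!\left(1 - \pi g/s + \pi\epsilon\right) < 0$ so the inequality flips, yields exactly the stated threshold $r_g \ge \log_2\!\left(\delta/(2n^2)\right) / \log_2\!\left(1 - \pi g/s + \pi\epsilon\right)$. A union bound over the at most $n^2$ comparisons then shows that with probability at least $1 - \delta/2$ every comparison is covered by a good agent. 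Combining this with the event from Lemma~\ref{ver2} via one more union bound gives that with probability at least $1 - \delta$ both events hold simultaneously, and on their intersection \textsc{CrowdSort} returns the correct sorting.

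The main obstacle I anticipate is handling the dependence between the ``bad''-agent identification event of Lemma~\ref{ver2} and the coverage event analyzed here, since both are functions of the same random types $t_1, \dots, t_s$. Rather than fight this dependence directly, the clean route is to bound each bad event by $\delta/2$ separately and combine by a union bound, which only requires that the two failure probabilities are each controlled, not that the events are independent. A secondary subtlety is justifying that the per-agent ``good'' probability is genuinely bounded by $\pi(g/s + \epsilon)$ uniformly rather than exactly equal to it; here I would lean on Remark~\ref{ub}, which gives the upper bound $F(p_g\hat\pi\pi/d) \le g/s + \epsilon$ on the effort probability, so the inequality direction is exactly what the union-bound argument needs.
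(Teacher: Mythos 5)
Your overall architecture matches the paper's proof exactly: condition on the event from Lemma~\ref{ver2} that all ``bad'' agents are caught, bound the probability that a fixed comparison is assigned only to non-good agents, union-bound over at most $n^2$ comparisons at level $\delta/(2n^2)$ each, and combine the two $\delta/2$ failure events by a union bound (which, as you correctly note, does not require independence). However, there is a genuine logical error in the key probabilistic step. You invoke the \emph{upper} bound from Remark~\ref{ub}, namely $\Pr[\text{effort}] \leq g/s + \epsilon$, to conclude that each agent is good with probability at most $\pi(g/s+\epsilon)$ and hence not good with probability at least $1 - \pi g/s + \pi\epsilon$. Two things go wrong here. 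First, the arithmetic: $1 - \pi(g/s + \epsilon) = 1 - \pi g/s - \pi\epsilon$, not $1 - \pi g/s + \pi\epsilon$. Second, and more importantly, a \emph{lower} bound on the per-agent probability of being non-good cannot yield the claimed \emph{upper} bound $\left(1 - \pi g/s + \pi\epsilon\right)^{r_g}$ on the probability that all $r_g$ assigned agents are non-good; the inequality points the wrong way for the union-bound argument.

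The fix is to use the other side of Remark~\ref{ub}, which is what the paper does: $F\left(p_g\hat{\pi}\pi/d\right) \geq g/s - \epsilon$, so each agent exerts effort with probability at least $g/s - \epsilon$, is good with probability at least $\pi(g/s - \epsilon)$, and therefore is bad with probability at most $1 - \pi g/s + \pi\epsilon$. This upper bound on the per-agent bad probability is what legitimately propagates through the product to give $\left(1-\pi g/s + \pi\epsilon\right)^{r_g}$ as an upper bound, and from there your logarithm manipulation and the rest of the argument go through unchanged. Your closing remark that ``the inequality direction is exactly what the union-bound argument needs'' is precisely backwards; you need the lower bound on the effort probability, not the upper bound.
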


\subsection{Principal's incentives}
We combine the above results in the following theorem, in which we additionally prove that it is possible to efficiently compute the optimal number of agents to incentivize.

\begin{restatable}{theorem}{optg} \label{g*}
Suppose that the principal pays agents $p_g$ and runs \textsc{CrowdSort($T, s, v_g, r_g$)}, with $v_g$ and $r_g$ bounded as in Lemma~\ref{redundancy2}, and a payment of $p_g$. Then:
\begin{enumerate}
\itemsep0em 
\item Assuming Cramer's conjecture holds, each agent performs $\tilde{\mathcal{O}}(n^{3/2} / s)$ comparisons.
\item The principal verifies $\mathcal{O}(\log{(s/\delta)})$ comparisons.
\item With probability $1-\delta$, the correct ranking is retrieved.
\item In $O(s)$ time, it is possible to compute the value $g^* \in \{0, \dots, s\}$ such that the principal's expected utility is maximized compared to all choices of $g \in [s]$.
\end{enumerate}
\end{restatable}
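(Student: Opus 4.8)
The plan is to assemble the four claims of Theorem~\ref{g*} from the lemmas already proved, with the genuinely new content being the final claim about efficiently computing $g^*$. Claims~1 and~3 follow essentially verbatim from the corresponding results in the known-disutilities setting: Lemma~\ref{number of comparisons} bounds the comparisons per agent at $\tilde{\mathcal{O}}(n^{3/2}/s)$ (this analysis depends only on \textsc{findWset} and the SGP construction, not on the payment structure), and Lemma~\ref{redundancy2} gives that with probability $1-\delta$ the ground-truth sorting is recovered once $r_g$ and $v_g$ are set as prescribed. For Claim~2, I would plug the bound $v_g \geq \log_2\!\left(2(s-\pi g + \pi s\epsilon)/\delta\right)$ from Lemma~\ref{ver2} into an order-of-magnitude estimate: since $g \leq s$ and $\pi s \epsilon \leq s$, the argument of the logarithm is $O(s/\delta)$, so $v_g = O(\log(s/\delta))$ verifications suffice. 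I would state each of these three as a one-line appeal to the relevant lemma.

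The heart of the proof is Claim~4. First I would write down the principal's expected utility as an explicit function $U(g)$ of the number of incentivized agents. This combines three ingredients: the expected number of comparisons returned by agents not identified as ``bad'' (which determines the positive utility $u_{\mu^*}$ term), the total expected payment $\hat{g}\cdot p_g$ where the expected number of effort-exerting agents is controlled by Remark~\ref{ub} as lying in $[g/s - \epsilon,\, g/s + \epsilon]\cdot s$, and the verification disutility $\bar\psi \cdot v_g$. Crucially, $p_g = \frac{d}{\hat\pi\pi}\tilde F_N(g/s)$ depends on $g$ only through the quantile function evaluated at $g/s$, and $v_g$, $r_g$ depend on $g$ only through the closed forms in Lemmas~\ref{ver2} and~\ref{redundancy2}. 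Thus $U(g)$ is a fixed, explicitly computable expression in $g$ for each of the $s+1$ candidate values $g \in \{0,1,\dots,s\}$.

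The efficient-computation claim then reduces to the observation that evaluating $U(g)$ for a single $g$ takes $O(1)$ time once the relevant quantities ($\tilde F_N(g/s)$, the logarithmic expressions, and $n, s, \pi, \epsilon, \delta, \psi, \bar\psi$) are available, since every term is a closed-form arithmetic or logarithmic expression. I would therefore describe a simple linear scan: compute $U(0), U(1), \dots, U(s)$ and return the maximizer $g^*$. This performs $s+1$ constant-time evaluations, giving total running time $O(s)$, which matches the claimed bound. I would note that no concavity or unimodality of $U$ is needed—brute-force enumeration over the $s+1$ feasible integer values is what delivers both optimality (we check every candidate) and the $O(s)$ guarantee.

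The main obstacle I anticipate is writing down $U(g)$ correctly and arguing that each of its terms is indeed evaluable in constant time. In particular I must be careful that the expected-payment term uses the correct expected headcount of paid agents (those who exert effort \emph{and} are not falsely flagged, versus those flagged despite effort), and that the utility term $u_{\mu^*}$ scaling with returned comparisons is expressed consistently with the $r_g$-fold redundancy. The quantile $\tilde F_N$ must be queryable in $O(1)$ (or precomputable), which is reasonable given $\hat F$ is known to the principal. Once $U(g)$ is pinned down as a closed-form function, the optimization and runtime claims are immediate, so the delicacy lies entirely in the bookkeeping of the utility expression rather than in any optimization-theoretic argument.
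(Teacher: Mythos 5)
Your proposal matches the paper's proof in both structure and substance: claims 1--3 are read off from Lemmas~\ref{number of comparisons}, \ref{ver2}, and \ref{redundancy2}, and claim 4 is handled by writing the principal's expected utility as a closed-form function of $g$ (with $p_g$, $v_g$, $r_g$ fixed at their minimum feasible values from the lemmas) and brute-force scanning over the $s+1$ candidates. The paper phrases this as an explicit constrained optimization whose inequality constraints it argues can be tightened to equalities, but that is the same observation you make when you note $v_g$ and $r_g$ enter only through their closed-form lower bounds, so the two arguments are essentially identical.
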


We show empirically that for $\bar{\psi}/\psi$ sufficiently large, the principal should run \textsc{CrowdSort}, not sort all items herself.

\section{Experiments}\label{sec:experiments} 
We present experiments that test our models and algorithm. We evaluate the impact of our algorithm on the principal's utility compared to when she does not incentivize agents, the efficacy of our algorithm in returning correct comparisons, and the robustness of our model to noisy values of $\psi$ and $\pi$.

\smallskip \emph{Experimental procedure.} Unless otherwise specified, we keep the following set of hyperparameter values across our experiments: $\pi = 0.8$, $\delta = 0.01$, $s = 100$, $n = 100$, $\bar{\psi} = 2$, and $ \psi = 0.01$. 
Peer grading is an example of a setting where $s = n$ since the number of assignments is the same as the number of students.
In Appendix~\ref{appendixexperiments}, we include experiments with a broad variety of parameter settings. 

Since we add noise to our model in these experiments, \textsc{CrowdSort} may not catch all bad agents and thus may return conflicting comparisons. We discard all conflicting comparisons (see Algorithm~\ref{postprocess} in Appendix~\ref{appendixexperiments}). Despite this pessimistic post-processing strategy, our experiments show that we can recover most comparisons even under noise.

\subsection{Principal's utility}

 \begin{figure}[t]
 \centering
     \begin{subfigure}[b]{0.4 \textwidth}
         \includegraphics[width=\textwidth]{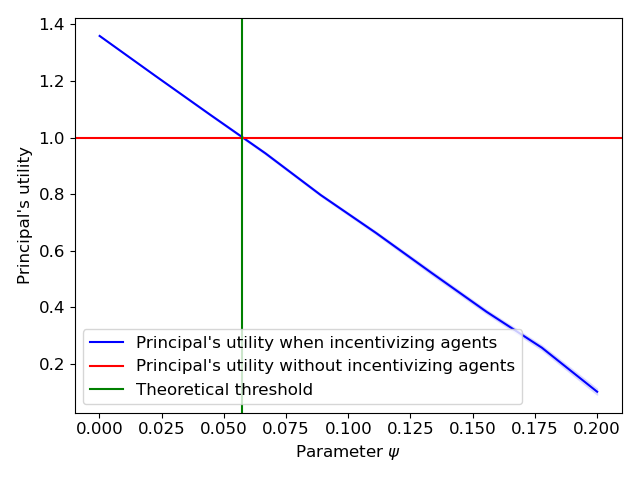}
         \centering
         \caption{No noise added to $\psi$. 
         }
         \label{fig:utilityvspsi}
     \end{subfigure}
     \hfill
     \begin{subfigure}[b]{0.4\textwidth}
         \includegraphics[width=\textwidth]{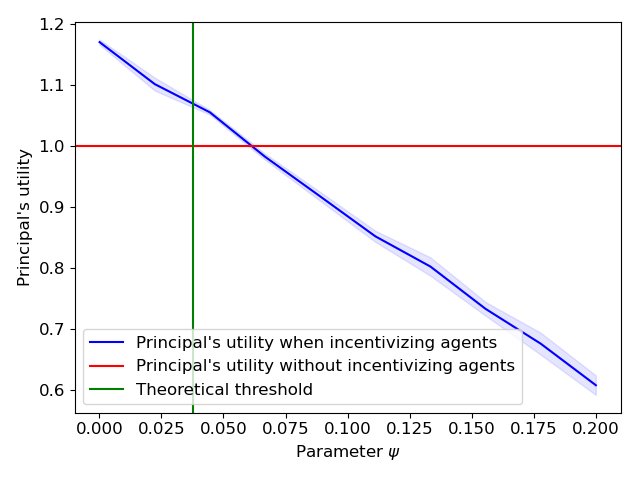}
         \centering
         \caption{Noise added to $\psi$. 
         }
         \label{fig:approxutilityvspi}
     \end{subfigure}
     \caption{Principal's utility as a function of $\psi$.\label{fig:utility}}
 \end{figure}

In Figure \ref{fig:utilityvspsi}, we plot the principal's utility as a function of $\psi$ (in Appendix \ref{appendixexperiments}, we plot utility as $\pi$ varies). The blue line shows the principal's utility when she runs \textsc{CrowdSort} with $p^*$, $v^*$, and $r^*$ defined in Equations~\eqref{eq:p*} and \eqref{eq:opt_r_v} (we describe this calculation in more detail in the next paragraph). The blue shading is a 90\% confidence interval across 50 trials. The red line shows the principal's utility if she performed all comparisons herself. Finally, the green line shows the theoretical threshold when $s=n$ is $\bar{\psi} /\psi = \Omega(\sqrt{s}\log{s})$ from Theorem~\ref{bigthm} (we use the value of this threshold without big-Omega notation: Equation~\eqref{condition} from the proof). This indicates the value of $\psi$ for 
 which the principal is indifferent between running \textsc{CrowdSort} or performing all comparisons herself.

The empirical utilities are calculated as follows.
Let $m$ be the number of agents not identified as ``bad'' by \textsc{CrowdSort} and $k$ be the number of comparisons after post-processing. We set $\lambda = 2$ to be the principal's per-comparison utility and define $U_{p^*} = \lambda k - \bar{\psi}|V|-p^{*}m$
to be her empirical utility for running \textsc{CrowdSort} with $p^*$, $v^*$, and $r^*$ defined in Equations~\eqref{eq:p*} and \eqref{eq:opt_r_v}. Similarly,
$U_0 = \lambda \binom{n}{2} - 2\bar{\psi} n\log n$
is her expected utility for performing all comparisons herself, where $2n\log n$ is the expected number of comparisons required by quicksort (this value could easily be adjusted for any sorting algorithm). For ease of comparison, the y-axis is scaled by dividing all values by the baseline utility $U_0$.

 In Figure \ref{fig:approxutilityvspi}, we add noise to the parameter $\psi$ and analyze the effect on the principal's utility. We sample the disutility of each agent independently and identically from the distribution Uniform($\psi-0.5\psi, \psi+0.5\psi$). However, we still calculate $p^*$, $v^*$, and $r^*$ from Equations~\eqref{eq:p*} and \eqref{eq:opt_r_v} using $\psi$ itself instead of the noisy version. The blue shading corresponds to a 90\% confidence interval across 50 trials.

 \smallskip
 \paragraph{Discussion.} Comparing Figures~\ref{fig:utilityvspsi} and \ref{fig:approxutilityvspi}, we can see that even when noise is added to our model, our theoretical threshold is close to the point at which the principal's utility from running \textsc{CrowdSort} dips below their utility from performing all comparisons herself. This demonstrates that our theoretical results are robust to noise in the model.

When adding noise to $\psi$, the principal must pay fewer agents because fewer are incentivized to exert effort. Thus, we see a steeper decline in Figure \ref{fig:utilityvspsi} compared to \ref{fig:approxutilityvspi}.

\subsection{Number of pairwise comparisons returned} \label{sec:experimentsnumcomps}

We evaluate the ability of \textsc{CrowdSort} to return enough correct comparisons to recover the ground-truth sorting after post-processing.
\begin{figure}[t]
\centering
    \begin{subfigure}[b]{0.4 \textwidth}
        \includegraphics[width=\textwidth]{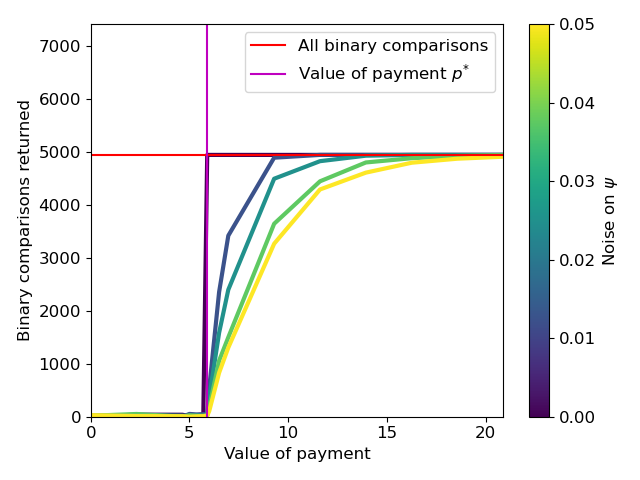}
        \centering
        \caption{Effect of noise in  parameter $\psi$ on the number of comparisons returned as a function of the payment.}
        \label{fig:compsvspsi}
    \end{subfigure}
    \hfill
    \begin{subfigure}[b]{0.4\textwidth}
        \includegraphics[width=\textwidth]{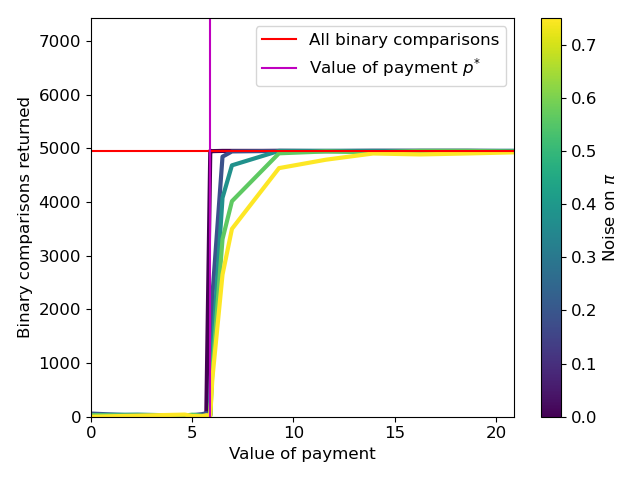}
        \centering
      \caption{Effect of noise in  parameter $\pi$ on the number of comparisons returned as a function of the payment.}
        \label{fig:compsvspi}
    \end{subfigure}
    \caption{Plots showing the number of comparisons returned as a function of $\pi$ and $\psi$. The heat plots indicate the support of uniformly sampled noise added to $\psi$ and $\pi$.\label{fig:comps}}
\end{figure}
Figures \ref{fig:compsvspsi} and \ref{fig:compsvspi} show the number of pairwise comparisons the algorithm returns as a function of the payment given to agents not identified as ``bad''. The horizontal lines  equal $\binom{n}{2} = \binom{100}{2} = 4950$. The vertical line equals the payment $p^{*}$ from Equation~\eqref{eq:p*}, which is a lower bound, above which agents are incentivized to exert effort. 

The multi-colored plots in Figure~\ref{fig:compsvspsi} indicate the number of comparisons returned when noise is added to $\psi$.
Given a value $H$ from the heatmap, each agent's disutility equals $\psi + \eta_i$ where $\psi = 0.01$ and $\eta_i \sim \text{Uniform}(0, H)$. Lighter-colored plots correspond to larger values of $H$, which means fewer binary comparisons are returned. Each plot indicates the average number of comparisons returned across 50 trials.

Similarly, the multi-colored plots in Figure~\ref{fig:compsvspi} indicate the number of comparisons returned when noise is added to $\pi$.  Given a value $H$ from the heatmap, each agent's probability $\pi_i = \Pr[t_i = \textnormal{``good''} \mid e_i = 1]$ equals $\pi - \eta_i$ where $\pi = 0.8$ and $\eta_i \sim \text{Uniform}(0, H).$ Each plot indicates the average number of comparisons returned across 10 iterations.

 \smallskip \emph{Discussion.}
These experiments show that our algorithm retrieves all comparisons in the absence of noise. 
Moreover, even when a large amount of noise is added to $\psi$ or $\pi$, our algorithm returns almost all of the comparisons with sufficient payment, despite the post-processing phase 
excluding conflicting comparisons, illustrating our algorithm's robustness.

\subsection{Optimization problem} \label{sec:experimentsopt}
In Section \ref{sec:unknown}, we relaxed the assumption that the principal knows the agents' disutilities by letting each agent's disutility be i.i.d. drawn from a distribution $\mathcal{D}$ the principal only has sample access to.
\begin{figure}[t]
   \centering
  \includegraphics[width=0.4\textwidth]{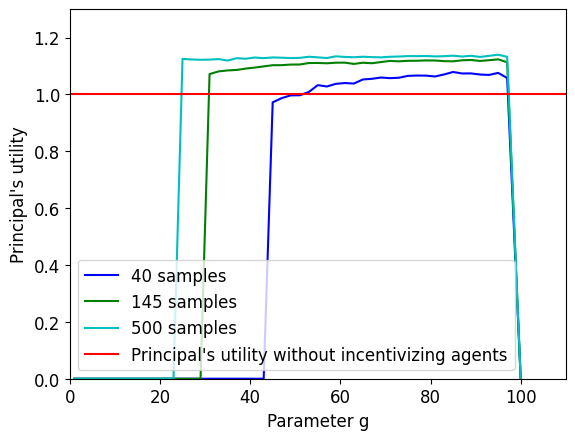}
   \caption{Principal's utility as a function of the number of agents incentivized $g$. 
    }
   \label{fig:opt}
\end{figure}
In Figure \ref{fig:opt}, we plot the principal's utility as a function of the number of agents incentivized. The horizontal line denotes the principal's utility if she were to perform all comparisons herself. The multicolored plots indicate the principal's utility for different sample sizes (namely, 40, 145, and 500) drawn from $\mathcal{D}$, which in this case is $\mathcal{N}(0.03, 0.01)$. 
We use parameter values $\bar{\psi} = \lambda = 3$. Each line indicates the average number of comparisons returned across 50 trials. Notice that many values of $g$ cannot be used as they cause the input to logarithms (e.g., defining $r^*$) to be negative.

 \smallskip \emph{Discussion.}
For suitable values of $g$, approximately $40$ samples are required for the principal's utility under our model to surpass the utility she would get if she were to perform all comparisons herself. The principal's utility increases as we increase either $g$ or the number of samples. This result is shown in the multi-colored plots in Figure \ref{fig:opt}. Therefore, even for a small number of samples, the principal's utility when implementing \textsc{CrowdSort} is well above her utility if she performed all comparisons on her own. 

\section{Conclusions and future directions}
We studied the multifaceted problem of designing a crowdsourcing mechanism that efficiently and accurately ranks a set of items using pairwise comparisons from rational agents. We based our approach on the classic principal-agent model from contract theory. To distribute comparisons among agents, we used an unexpected connection to the social golfer problem which allowed us to simultaneously minimize the agents' workload and ensure enough agents evaluate each comparison. We showed that by optimizing the payment mechanisms, the principal can incentivize enough agents to exert effort, ensuring our algorithm recovers the ground-truth ranking. Our experiments showed that even when noise is added to our original model, our method consistently aids the principal in retrieving the accurate ground-truth ordering. 

Our results open up a variety of questions for future research. We assumed agents cannot collude, which allowed us to assign the same set of verified comparisons to all agents. How should our mechanism change if agents can collude? Moreover, we assumed that agents who do not exert effort will return comparisons consistent with a uniformly-random permutation of the items. What other models of agent behavior can we study under this framework? Finally, as is standard in contract theory, we assumed that the principal knows the probability $\pi$ that an effortful agent is good. How should we handle the case where $\pi$ is unknown?

\newpage 
\bibliographystyle{plainnat}
\bibliography{refs}

\begin{thebibliography}{41}
\providecommand{\natexlab}[1]{#1}
\providecommand{\url}[1]{\texttt{#1}}
\expandafter\ifx\csname urlstyle\endcsname\relax
  \providecommand{\doi}[1]{doi: #1}\else
  \providecommand{\doi}{doi: \begingroup \urlstyle{rm}\Url}\fi

\bibitem[Asudeh et~al.(2015)Asudeh, Zhang, Hassan, Li, and Zaruba]{10.1145/2806416.2806451}
Abolfazl Asudeh, Gensheng Zhang, Naeemul Hassan, Chengkai Li, and Gergely~V. Zaruba.
\newblock Crowdsourcing pareto-optimal object finding by pairwise comparisons.
\newblock In \emph{Proceedings of the 24th ACM International on Conference on Information and Knowledge Management (CIKM)}, 2015.

\bibitem[Barnett(2003)]{barnett2003modern}
William Barnett.
\newblock The modern theory of consumer behavior: Ordinal or cardinal?
\newblock \emph{The Quarterly Journal of Austrian Economics}, 6\penalty0 (1):\penalty0 41--65, 2003.

\bibitem[Bradley and Terry(1952)]{bradley1952rank}
Ralph~Allan Bradley and Milton~E Terry.
\newblock Rank analysis of incomplete block designs: I. the method of paired comparisons.
\newblock \emph{Biometrika}, 39\penalty0 (3/4):\penalty0 324--345, 1952.

\bibitem[Cai et~al.(2015)Cai, Daskalakis, and Papadimitriou]{pmlr-v40-Cai15}
Yang Cai, Constantinos Daskalakis, and Christos Papadimitriou.
\newblock Optimum statistical estimation with strategic data sources.
\newblock In \emph{Conference on Learning Theory (COLT)}, 2015.

\bibitem[Caragiannis et~al.(2016)Caragiannis, Krimpas, and Voudouris]{caragiannis2016effective}
Ioannis Caragiannis, George~A Krimpas, and Alexandros~A Voudouris.
\newblock How effective can simple ordinal peer grading be?
\newblock In \emph{Proceedings of the 2016 ACM Conference on Economics and Computation}, pages 323--340, 2016.

\bibitem[Carterette et~al.(2008)Carterette, Bennett, Chickering, and Dumais]{carterette2008here}
Ben Carterette, Paul~N Bennett, David~Maxwell Chickering, and Susan~T Dumais.
\newblock Here or there: Preference judgments for relevance.
\newblock In \emph{Advances in Information Retrieval: 30th European Conference on IR Research, ECIR 2008, Glasgow, UK, March 30-April 3, 2008. Proceedings 30}, pages 16--27. Springer, 2008.

\bibitem[Chen et~al.(2013)Chen, Bennett, Collins-Thompson, and Horvitz]{chen2013pairwise}
Xi~Chen, Paul~N Bennett, Kevyn Collins-Thompson, and Eric Horvitz.
\newblock Pairwise ranking aggregation in a crowdsourced setting.
\newblock In \emph{Proceedings of the International Conference on Web Search and Data Mining}, pages 193--202, 2013.

\bibitem[Chen and Suh(2015)]{chen2015spectral}
Yuxin Chen and Changho Suh.
\newblock Spectral {MLE}: Top-k rank aggregation from pairwise comparisons.
\newblock In \emph{International Conference on Machine Learning (ICML)}, 2015.

\bibitem[Duetting et~al.(2022)Duetting, Ezra, Feldman, and Kesselheim]{duetting2021combinatorial}
Paul Duetting, Tomer Ezra, Michal Feldman, and Thomas Kesselheim.
\newblock Combinatorial contracts.
\newblock In \emph{Symposium on Foundations of Computer Science (FOCS)}, 2022.

\bibitem[Ghose et~al.(2012)Ghose, Ipeirotis, and Li]{ghose2012designing}
Anindya Ghose, Panagiotis~G Ipeirotis, and Beibei Li.
\newblock Designing ranking systems for hotels on travel search engines by mining user-generated and crowdsourced content.
\newblock \emph{Marketing Science}, 31\penalty0 (3):\penalty0 493--520, 2012.

\bibitem[Guiver and Snelson(2009)]{guiver2009bayesian}
John Guiver and Edward Snelson.
\newblock Bayesian inference for plackett-luce ranking models.
\newblock In \emph{International Conference on Machine Learning (ICML)}, 2009.

\bibitem[Guo et~al.(2012)Guo, Parameswaran, and Garcia-Molina]{10.1145/2213836.2213880}
Stephen Guo, Aditya Parameswaran, and Hector Garcia-Molina.
\newblock So who won? dynamic max discovery with the crowd.
\newblock In \emph{Proceedings of the 2012 ACM SIGMOD International Conference on Management of Data}, SIGMOD '12, page 385–396, New York, NY, USA, 2012. Association for Computing Machinery.
\newblock ISBN 9781450312479.
\newblock \doi{10.1145/2213836.2213880}.
\newblock URL \url{https://doi.org/10.1145/2213836.2213880}.

\bibitem[Guruganesh et~al.(2021)Guruganesh, Schneider, and Wang]{guruganesh2020contracts}
Guru Guruganesh, Jon Schneider, and Joshua Wang.
\newblock Contracts under moral hazard and adverse selection.
\newblock In \emph{ACM Conference on Economics and Computation (EC)}, 2021.

\bibitem[Jurca and Faltings(2009)]{Jurca_2009}
R.~Jurca and B.~Faltings.
\newblock Mechanisms for making crowds truthful.
\newblock \emph{Journal of Artificial Intelligence Research}, 34:\penalty0 209--253, mar 2009.
\newblock \doi{10.1613/jair.2621}.
\newblock URL \url{https://doi.org/10.1613%2Fjair.2621}.

\bibitem[Jurca and Faltings(2006{\natexlab{a}})]{Jurca2006RobustIF}
Radu Jurca and Boi Faltings.
\newblock Robust incentive-compatible feedback payments.
\newblock In \emph{Workshop on Trading Agent Design and Analysis}, 2006{\natexlab{a}}.

\bibitem[Jurca and Faltings(2006{\natexlab{b}})]{jurca2006}
Radu Jurca and Boi Faltings.
\newblock Minimum payments that reward honest reputation feedback.
\newblock In \emph{ACM Conference on Economics and Computation (EC)}, 2006{\natexlab{b}}.

\bibitem[Kang et~al.(2019)Kang, Xiong, Niyato, Ye, Kim, and Zhao]{kang2019toward}
Jiawen Kang, Zehui Xiong, Dusit Niyato, Dongdong Ye, Dong~In Kim, and Jun Zhao.
\newblock Toward secure blockchain-enabled internet of vehicles: Optimizing consensus management using reputation and contract theory.
\newblock \emph{IEEE Transactions on Vehicular Technology}, 68\penalty0 (3):\penalty0 2906--2920, 2019.

\bibitem[Kou et~al.(2017)Kou, Li, Wang, U, and Gong]{Kou2017CrowdsourcedTQ}
Ngai~Meng Kou, Yan Li, Hao Wang, Leong~Hou U, and Zhiguo Gong.
\newblock Crowdsourced top-k queries by confidence-aware pairwise judgments.
\newblock \emph{Proceedings of the 2017 ACM International Conference on Management of Data}, 2017.
\newblock URL \url{https://api.semanticscholar.org/CorpusID:36647621}.

\bibitem[Kreutzer et~al.(2018)Kreutzer, Uyheng, and Riezler]{kreutzer2018reliability}
Julia Kreutzer, Joshua Uyheng, and Stefan Riezler.
\newblock Reliability and learnability of human bandit feedback for sequence-to-sequence reinforcement learning.
\newblock \emph{arXiv preprint arXiv:1805.10627}, 2018.

\bibitem[Laffont and Martimort(2002)]{f7fd804f-0ce0-35f1-9381-72a4444d8df3}
Jean-Jacques Laffont and David Martimort.
\newblock \emph{Moral Hazard: The Basic Trade-Offs}, pages 145--186.
\newblock Princeton University Press, 2002.
\newblock URL \url{http://www.jstor.org/stable/j.ctv7h0rwr.8}.

\bibitem[Lin(2010)]{lin2010rank}
Shili Lin.
\newblock Rank aggregation methods.
\newblock \emph{Wiley Interdisciplinary Reviews: Computational Statistics}, 2\penalty0 (5):\penalty0 555--570, 2010.

\bibitem[Lu and Boutilier(2011)]{lu2011learning}
Tyler Lu and Craig Boutilier.
\newblock Learning mallows models with pairwise preferences.
\newblock In \emph{International Conference on Machine Learning (ICML)}, 2011.

\bibitem[Mallows(1957)]{Mallows1957NONNULLRM}
C.~L. Mallows.
\newblock Non-null ranking models. i.
\newblock \emph{Biometrika}, 44:\penalty0 114--130, 1957.
\newblock URL \url{https://api.semanticscholar.org/CorpusID:121527283}.

\bibitem[Mi and Yeung(2015)]{mi2015probabilistic}
Fei Mi and Dit-Yan Yeung.
\newblock Probabilistic graphical models for boosting cardinal and ordinal peer grading in moocs.
\newblock In \emph{AAAI Conference on Artificial Intelligence}, 2015.

\bibitem[{\O}rting et~al.(2017){\O}rting, Cheplygina, Petersen, Thomsen, Wille, and de~Bruijne]{orting2017crowdsourced}
Silas~Nyboe {\O}rting, Veronika Cheplygina, Jens Petersen, Laura~H Thomsen, Mathilde~MW Wille, and Marleen de~Bruijne.
\newblock Crowdsourced emphysema assessment.
\newblock In \emph{Intravascular Imaging and Computer Assisted Stenting, and Large-Scale Annotation of Biomedical Data and Expert Label Synthesis: 6th Joint International Workshops, CVII-STENT 2017 and Second International Workshop, LABELS 2017, Held in Conjunction with MICCAI 2017, Qu{\'e}bec City, QC, Canada, September 10--14, 2017, Proceedings 2}, pages 126--135. Springer, 2017.

\bibitem[Piech et~al.(2013)Piech, Huang, Chen, Do, Ng, and Koller]{piech2013tuned}
Chris Piech, Jonathan Huang, Zhenghao Chen, Chuong Do, Andrew Ng, and Daphne Koller.
\newblock Tuned models of peer assessment in moocs.
\newblock In \emph{Proceedings of The 6th International Conference on Educational Data Mining (EDM)}, 2013.

\bibitem[Plackett(1975)]{RePEc:bla:jorssc:v:24:y:1975:i:2:p:193-202}
R.~L. Plackett.
\newblock {The Analysis of Permutations}.
\newblock \emph{Journal of the Royal Statistical Society Series C}, 24\penalty0 (2):\penalty0 193--202, June 1975.
\newblock \doi{10.2307/2346567}.
\newblock URL \url{https://ideas.repec.org/a/bla/jorssc/v24y1975i2p193-202.html}.

\bibitem[Rafailov et~al.(2023)Rafailov, Sharma, Mitchell, Ermon, Manning, and Finn]{rafailov2023direct}
Rafael Rafailov, Archit Sharma, Eric Mitchell, Stefano Ermon, Christopher~D Manning, and Chelsea Finn.
\newblock Direct preference optimization: Your language model is secretly a reward model.
\newblock \emph{arXiv preprint arXiv:2305.18290}, 2023.

\bibitem[Raman and Joachims(2014)]{raman2014methods}
Karthik Raman and Thorsten Joachims.
\newblock Methods for ordinal peer grading.
\newblock In \emph{International Conference on Knowledge Discovery and Data Mining (KDD)}, 2014.

\bibitem[Raman and Joachims(2015)]{10.1145/2724660.2724678}
Karthik Raman and Thorsten Joachims.
\newblock Bayesian ordinal peer grading.
\newblock In \emph{Proceedings of the ACM Conference on Learning at Scale}, 2015.

\bibitem[Raykar et~al.(2010)Raykar, Yu, Zhao, Valadez, Florin, Bogoni, and Moy]{raykar2010learning}
Vikas~C Raykar, Shipeng Yu, Linda~H Zhao, Gerardo~Hermosillo Valadez, Charles Florin, Luca Bogoni, and Linda Moy.
\newblock Learning from crowds.
\newblock \emph{Journal of machine learning research}, 11\penalty0 (4), 2010.

\bibitem[Sadigh et~al.(2017)Sadigh, Dragan, Sastry, and Seshia]{sadigh2017active}
Dorsa Sadigh, Anca~D Dragan, Shankar Sastry, and Sanjit~A Seshia.
\newblock \emph{Active preference-based learning of reward functions}.
\newblock 2017.

\bibitem[Shah et~al.(2015)Shah, Balakrishnan, Bradley, Parekh, Ramchandran, and Wainwright]{pmlr-v38-shah15}
Nihar Shah, Sivaraman Balakrishnan, Joseph Bradley, Abhay Parekh, Kannan Ramchandran, and Martin Wainwright.
\newblock {Estimation from Pairwise Comparisons: Sharp Minimax Bounds with Topology Dependence}.
\newblock In \emph{International Conference on Artificial Intelligence and Statistics (AISTATS)}, 2015.

\bibitem[Shah et~al.(2013)Shah, Bradley, Parekh, Wainwright, and Ramchandran]{shah2013case}
Nihar~B Shah, Joseph~K Bradley, Abhay Parekh, Martin Wainwright, and Kannan Ramchandran.
\newblock A case for ordinal peer-evaluation in moocs.
\newblock In \emph{NIPS Workshop on Data-Driven Education}, 2013.

\bibitem[Shah et~al.(2016)Shah, Balakrishnan, Bradley, Parekh, Ramch, Wainwright, et~al.]{shah2016estimation}
Nihar~B Shah, Sivaraman Balakrishnan, Joseph Bradley, Abhay Parekh, Kannan Ramch, Martin~J Wainwright, et~al.
\newblock Estimation from pairwise comparisons: Sharp minimax bounds with topology dependence.
\newblock \emph{Journal of Machine Learning Research}, 17\penalty0 (58):\penalty0 1--47, 2016.

\bibitem[Stewart et~al.(2005)Stewart, Brown, and Chater]{stewart2005absolute}
Neil Stewart, Gordon~DA Brown, and Nick Chater.
\newblock Absolute identification by relative judgment.
\newblock \emph{Psychological review}, 112\penalty0 (4):\penalty0 881, 2005.

\bibitem[Thurstone(1927)]{ThurstoneTheMO}
Louis~Leon Thurstone.
\newblock The method of paired comparisons for social values.
\newblock \emph{The Journal of Abnormal and Social Psychology}, 21:\penalty0 384--400, 1927.
\newblock URL \url{https://api.semanticscholar.org/CorpusID:144595330}.

\bibitem[Triska(2008)]{SGP}
Markus Triska.
\newblock \emph{Solution Methods for the Social Golfer Problem}.
\newblock PhD thesis, Vienna University of Technology, March 2008.
\newblock Available from \url{https://www.metalevel.at/mst.pdf}.

\bibitem[Tsukida et~al.(2011)Tsukida, Gupta, et~al.]{tsukida2011analyze}
Kristi Tsukida, Maya~R Gupta, et~al.
\newblock How to analyze paired comparison data.
\newblock \emph{Department of Electrical Engineering University of Washington, Tech. Rep. UWEETR-2011-0004}, 1, 2011.

\bibitem[Waggoner and Chen(2014)]{Waggoner_Chen_2014}
Bo~Waggoner and Yiling Chen.
\newblock Output agreement mechanisms and common knowledge.
\newblock \emph{Proceedings of the AAAI Conference on Human Computation and Crowdsourcing}, 2\penalty0 (1):\penalty0 220--226, Sep. 2014.
\newblock \doi{10.1609/hcomp.v2i1.13151}.
\newblock URL \url{https://ojs.aaai.org/index.php/HCOMP/article/view/13151}.

\bibitem[Zhu et~al.(2023)Zhu, Bates, Yang, Wang, Jiao, and Jordan]{zhu2023sample}
Banghua Zhu, Stephen Bates, Zhuoran Yang, Yixin Wang, Jiantao Jiao, and Michael~I. Jordan.
\newblock The sample complexity of online contract design.
\newblock In \emph{ACM Conference on Economics and Computation (EC)}, 2023.

\end{thebibliography}

\newpage
\onecolumn
\begin{appendices}

\section{Summary of notation} \label{appendixnotation}

\renewcommand{\arraystretch}{1.2}
\begin{table}[h]\normalsize
    \centering
    \caption{Summary of Most Commonly Used Notation}
    \begin{tabular}{p{0.25\linewidth} | p{0.7\linewidth}}
        \hline
        \textbf{Notation} & \textbf{Definition}    \\
        \hline
        $a_1, \cdots, a_s$ &  $s$ agents\\
        \hline
        $T = \{T_1, \cdots, T_n\}$ & set of $n$ items\\
        \hline
        $\mu^{*}: T \rightarrow [n]$ & ground-truth permutation \\
        \hline
        $M_T$ & set of all permutations $\mu: T \rightarrow [n]$\\
        \hline
        $V \subseteq [n] \times [n]$ & verified set of pairs of items  \\
        \hline
        $W= \{W_i\}_{i=1}^{s}$   & collection of subsets of pairs of items assigned to $s$ agents \\
        \hline
        $t_i \in \{\text{``good'', ``bad''}\}$ & type of agent $a_i$ \\
        \hline
        $e_i \in \{0, 1\}$ & indicates whether agent $a_i$ exerts effort \\
        \hline
        $\pi \in [0, 1]$ & $\mathbb{P}[t_i = \text{``good''} | e_i = 1]$\\
        \hline
        $c_i \in \{0, 1\}$ & indicates whether agent $a_i$ is identified as ``bad'' \\
        \hline
        $\hat{\pi} \in [0, 1]$ & $\mathbb{P}[c_i = 1 | t_i = \text{``bad''}]$\\
        \hline
        $\psi \in \mathbb{R}^{+}$ & (Section 2) per-comparison disutility of agents \\
        \hline
        $\bar{\psi}$ & per-comparison disutility of the principal \\
        \hline
        $\lambda$ & per-comparison utility of the principal \\
        \hline
        $d$ & upper bound on the expected number of pairwise comparisons each agent performs \\
        \hline
        $T_j <_{\mu} T_k$ & indicates whether the ground truth score of $T_j$ under $\mu$ is smaller than that of $T_k$ \\
        \hline
        $F, \hat{F}, \tilde{F}$ & CDF, estimate of CDF, quantile function corresponding to distribution $\mathcal{D}$ over disutilities \\
        \hline 
        $\hat{g}, g^*$ & realized number and utility-maximizing number of incentivized agents \\
        \hline
    \end{tabular}
\end{table}

\section{Omitted proofs from Section 2} \label{appendixa}
\payment*
\begin{proof}    
Let $\mathbb{E}[|C_i|] \leq d$ for some $d > 0$ be an upper bound on the expected number of comparisons that each agent is asked to perform. We denote agent $a_i$'s utility as $u_i$ and calculate its expected value given that the agent exerts effort:
\begin{align*}
     \mathbb{E}[u_i| e_{i}=1] & = \mathbb{E}[p(c_i) - |  C_i|\psi \cdot e_{i}|e_{i}=1] 
    \\ & = p^{*} \mathbb{P}[p(c_i)=p^{*}|e_{i}=1] - \mathbb{E}[|C_i|] \psi
\end{align*} 
Since $p_i = p^*$ if either (1) $t_i$ = ``good'' or (2) $t_i$ = ``bad'' but the agent is not identified as ``bad'' ($c_i = 0$), we have:
\begin{align*}
    \mathbb{E}[u_i| e_{i}=1]  & = p^{*} \left( \mathbb{P}[t_{i}=\text{``good''}|e_{i}=1]\right.
    \\ &  \left. +\mathbb{P}[t_{i}=\text{``bad''} \cap c_{i}=0 |e_{i}=1] \right) - \mathbb{E}[|C_i|]\psi 
\end{align*}
With Bayes' rule and simplifying we get:
\begin{align*}
    & \mathbb{E}[u_i| e_{i}=1] 
    \\ & = p^{*}\pi + p^{*}\mathbb{P}[c_{i}=0|t_{i} =\text{``bad''}, e_i=1]\mathbb{P}[t_{i} = \text{``bad''}|e_i=1] - \mathbb{E}[|C_i|]\psi \\
        & = p^{*}\pi + p^{*}(1-\hat{\pi})(1-\pi) - \mathbb{E}[|C_i|]\psi
\end{align*} 
We similarly handle the second case, in which the agent does not exert effort, this time noting that the agent does not experience disutility:
\begin{align*}
    \mathbb{E}[u_i| e_{i}=0] & = \mathbb{E}[p(c_i)  - |C_i|\psi \cdot e_{i}|e_{i}=0] \\ & = p^{*}\mathbb{P}[p(c_i)=p^{*}|e_{i}=0]  \\
    & = p^{*}\mathbb{P}[c_{i}=0|t_{i} =\text{``bad''}] \\
    & = p^{*}(1-\hat{\pi})
\end{align*} 
For the agent to be incentivized to exert effort, the payment needs to be such that:
\begin{align*}
    \mathbb{E}[u_i|e_i =1]  \geq \mathbb{E}[u_i|e_i =0].
\end{align*} 
We use the quantities from above to get:
\begin{align*}
     p^{*}\pi + p^{*}(1-\hat{\pi})(1-\pi) - \mathbb{E}[|C_i|]\psi \geq p^{*}(1-\hat{\pi})
\end{align*}
which holds if and only if $p^* \hat{\pi}\pi \geq \mathbb{E}[|C_i|]\psi$. By $\mathbb{E}[|C_i|] \leq d$ and simplifying we get the lower bound:
\begin{align} \label{constraint}
    p^* \geq \frac{d\psi}{\hat{\pi}\pi}
\end{align}
To satisfy individual rationality, we require that:
\begin{align*}
    &\mathbb{E}[u_i| e_{i}=1] \geq 0
\end{align*}
which holds if and only if:
\begin{align*}
    p^{*}\pi + p^{*}(1-\hat{\pi})(1-\pi) - \mathbb{E}[|C_i|]\psi \geq 0
\end{align*}
and by simplifying we get:
\begin{align*}
p^* \geq \frac{\mathbb{E}[|C_i|]\psi}{1+\hat{\pi}\pi 
 - \hat{\pi}} 
\end{align*}
which is always satisfied if constraint \ref{constraint} is satisfied.
\end{proof}
\redundant*
\begin{proof}
    Let $\tilde{\mu} \in M_T$ be the permutation that an arbitrary ``bad'' agent $a_i$ chooses. We must show that for every pair of items $(T_{V_j}, T_{V_j'})\in V$:
    \begin{align*}
        \mathbb{P} \left[ T_{V_j}<_i T_{V_j'} \mid T_{V_k} <_i T_{V_k'}  \quad \forall (T_{V_k}, T_{V_k'}) \in V \setminus \{(T_{V_j}, T_{V_j'})\} \right] = \frac{1}{2}
    \end{align*}
    or equivalently:
    \begin{align*}
        \mathbb{P} \left[ \underbrace{T_{V_j}<_{\tilde{\mu}} T_{V_j'}}_{P_1} \mid \underbrace{T_{V_k} <_{\tilde{\mu}} T_{V_k'}  \quad \forall (T_{V_k}, T_{V_k'}) \in V \setminus \{(T_{V_j}, T_{V_j'})\} }_{P_2}\right] = \frac{1}{2}
    \end{align*}
    This holds if and only if the number of permutations that satisfy property $P_1$ and property $P_2$ is exactly half of the number of permutations that only satisfy property $P_2$. This holds since no items in the set V appear in a pair more than once so $(T_{V_j}, T_{V_{j'}})$ is not contained in $P_2$. Therefore, no information can be inferred about the relative indices of items $T_{V_j}$ and $T_{V_j'}$.
\end{proof}

\ver*

\begin{proof}
    By Lemma~\ref{redundant} we have that each agent of type ``bad'' samples every verified comparison i.i.d. from Bernoulli(1/2). Therefore, we have that for every ``bad'' agent $a_i$:
    \begin{align*}
        \mathbb{P}[i \not \in B \mid t_i = \text{``bad''}] =  \frac{1}{2^{|V|}}
    \end{align*}
    So, we know that:
    \begin{align*}
         \mathbb{P}[(t_{i}=\text{``bad''}) \cap (i\not\in B)]   = \mathbb{P}[(i\not\in B) \; | \; (t_{i}=\text{``bad''}) ] \cdot \mathbb{P}[(t_{i}=\text{``bad''})] = \frac{ 1-\pi }{2^{|V|}}
    \end{align*}
    By taking the union bound over all agents we get:
    \begin{align*}
        \mathbb{P}[\exists i\ \text{s.t.}\ (t_{i}=\text{``bad''} \cap i\not\in B)] \leq \frac{(1-\pi)s}{2^{|V|}} 
    \end{align*}
    Upper bounding this probability by $\delta/2$ and solving for $|V|$ gives us the lower bound:
    \begin{align*}
        |V| \geq \log \left( \frac{2(1-\pi)s}{\delta} \right) 
    \end{align*}
\end{proof}

\redundancy*

\begin{proof}

Assume that \textsc{CrowdSort}($T, s, v, r$) must return all $\binom{n}{2}$ correct pairwise comparisons  for the ground-truth ordering to be retireved. 
Since $|V| \geq \log \left( \frac{2(1-\pi)s}{ \delta} \right)$ we know from Lemma \ref{ver} that with high probability \textsc{CrowdSort}($T, s, v, r$) identifies all ``bad''  agents. Therefore, it suffices to assign each comparison to at least one `good''  agent, for the true value of the comparison to be returned.  The number of ``good''  agents in a set of $r$ agents follows a binomial distribution:
\begin{align*}
    \mathbb{P}[ t_i =  \text{bad} \; \forall a_i \in [r]] &= (1-\pi)^r
\end{align*}
so by the union bound over all comparisons we have that the probability that there exists a pairwise comparison that is assinged to $r$ ``bad'' agents is at most:
\begin{align*}
     n^2\mathbb{P}[ t_i =  \text{bad} \; \forall a_i \in [r]] = n^2 (1-\pi)^r \leq \frac{\delta}{2}
\end{align*}
Thus, we get that:
\begin{align*}
    r \geq \frac{\log \left( \frac{\delta}{n^2}\right)}{\log (1-\pi)}
\end{align*}

Now let $E_{1}$ be the event that $B = \{i | \forall a_i$ s.t. $t_i =$``bad''$\}$ and  $E_2$ be the event that every pairwise comparison is assigned to at least one ``good'' agent. For \textsc{CrowdSort}($T, s, v, r$) to return the items' ground truth ordering with probability at least $1-\delta$, it suffices that all pairwise comparisons are assigned to at least one ``good'' agent and that all comparisons performed by ``bad'' agents are eliminated i.e. all ``bad'' agents are identified. This holds if $\mathbb{P}[ E_1 \cap E_2] \geq 1-\delta$. 
\par By Lemma \ref{ver} we have that $\mathbb{P}[E_{1}] \geq 1-\frac{\delta}{2}$. We also know $\mathbb{P}[E_{2}] \geq 1-\frac{\delta}{2}$. We want to show that $\mathbb{P}[E_{1} \cap E_{2}] \geq 1-\delta$. It suffices to show that $\mathbb{P}[\neg E_{1} \cup \neg E_{2}] \leq \delta$. This holds because  by the union bound $\mathbb{P}[\neg E_{1} \cup \neg E_{2}] \leq \mathbb{P}[\neg E_{1}] + \mathbb{P}[E_{2}] \leq 2\frac{\delta}{2} = \delta$.
\end{proof}

\numcomps*
\begin{proof}
In \textsc{findWset}, the algorithm \textsc{findPrime} finds the smallest prime $q$, such that $q \geq \sqrt{|T|}$. In \textsc{findWset}, each agent is assigned all pairwise comparisons among non-hallucinated items within a subset, for $r = \mathcal{O}(\log{n})$ subsets. Each subset consists of $q$ items, so each agent will be asked to compare the following number of items:
\begin{align*}
    r\cdot \frac{n}{s} \cdot q & \leq O\left(\log{n} \cdot \frac{n}{s} \cdot \left( \sqrt{|T|} + (q-\sqrt{|T|})\right) \right)
    \\ & \leq O\left(\log{n} \cdot \left( \sqrt{n} + (q-\sqrt{n})\right) \right)
    \\ & \leq O \left( \frac{n}{s} \log{n} \cdot (\sqrt{n} + (\log{\sqrt{n}})^2) \right)
    \\ & \leq \tilde{O}(n^{3/2}/s)
\end{align*}
where in the last step, we use Cramer's conjecture, which states the following:
If $p_{n}$ and $p_{n+1}$ are two consecutive prime numbers, then:
\begin{align*}
    p_{n+1} - p_{n} = \mathcal{O}((\log p_n)^2)
\end{align*}

For every subset that the agent is asked to fully sort, the agent can sort items adaptively so the number of pairwise comparisons he will perform is $O(q\log{q})$.  Therefore:

\begin{align*}
    |W_i| & \leq r \cdot \frac{n}{s} \cdot \mathcal{O}(q\log{q}) \tag{$rn/s$ subsets of $q$ items each }
    \\ & \leq \mathcal{O}\left( \frac{n}{s}\log{n} \right) \cdot \mathcal{O} \left( \left( \sqrt{|T|} + (q-\sqrt{|T|})\right) \cdot \log{\left(\sqrt{|T|} + (q-\sqrt{|T|} )\right) } \right) 
    \\ & \leq \mathcal{O}\left( \frac{n}{s}\log{n} \right) \cdot \mathcal{O} \left( \left( \sqrt{n} + (\log{\sqrt{n}})^2\right) \cdot \log{\left(\sqrt{n} + (\log{\sqrt{n}})^2 \right) } \right) 
    \\ & \leq \mathcal{O}\left( \frac{n}{s}\log{n} \right) \cdot \mathcal{O} \left( \left( \sqrt{n} + (\log{\sqrt{n}})^2\right) \cdot \log{\sqrt{n}  } \right) 
    \\ & \leq \mathcal{O}\left( \frac{n}{s}\log{n} \right) \cdot \mathcal{O} \left( \sqrt{n}\cdot \log{\sqrt{n}  } \right) 
    \\ & \leq \mathcal{O}(\frac{n^{3/2}}{s}\log^2{n})
    \\ & \leq \tilde{\mathcal{O}}(n^{3/2}/s),
\end{align*}

\end{proof}
\begin{remark}\label{rmrk}
    In expectation the number of comparisons needed to fully sort $q$ items using quicksort is at most $2q\ln{q}$. Therefore, we write that for every agent $a_i$:
\begin{align*}
    \mathbb{E}[|C_i|] & = \mathbb{E}[|V| + |W_i|] \leq |V| + 2rnq\ln{q}/s 
\end{align*}
and we set $d = |V| + 2rnq\ln{q}/s = \mathcal{O}(n^{3/2}\log^2{n}/s)$.
\end{remark}

\begin{algorithm}[t]
    \caption{SGP($T=\{T_1, \cdots T_{q^2}\}$)}\label{alg:sgp}
    \begin{algorithmic}[1]
    \State $\sigma \leftarrow$ \text{Uniform}$(S_{q^2})$
    \For{$k=1, 2, \cdots, q-1$}
        \For{$(i,j)\in \mathbb{F}_q^2$}

        \State $L_{k-1}(i,j)\leftarrow ki+j$
        \EndFor
        \For{$l\in \mathbb{F}_q$}

        \State $R_{k-1,l} \leftarrow \{\sigma^{-1}(i,j) : L_{k-1}(i,j) = l\}$
        \EndFor
        \State $P_{k-1}\leftarrow\{R_{k-1,0}, \cdots, R_{k-1,q-1} \}$ 
    \EndFor
    \For{$l \in \mathbb{F}_q$}
        \State $N_l \leftarrow \{\sigma^{-1}(0, l),\sigma^{-1}(1, l),\cdots \sigma^{-1}(q-1, l)\}$
        \State $M_l \leftarrow \{\sigma^{-1}(l, 0),\sigma^{-1}(l, 1),\cdots \sigma^{-1}(l, q-1)\}$
    \EndFor
    \State $P_{q-1}\leftarrow \{M_0, \cdots, M_{q-1}\}$
    \State $P_q\leftarrow \{N_0, \cdots, N_{q-1}\}$
\end{algorithmic}
\end{algorithm}

\begin{restatable}{theorem}{sgp}\label{thm:sgp}[\citep{SGP}]
For a prime power $q$, there exists a solution to the instance SGP($q, q, q+1$), such that every pair of golfers plays against each other exactly once.
\end{restatable}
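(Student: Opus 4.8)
The plan is to recognize the construction in Algorithm~\ref{alg:sgp} as the standard affine plane $AG(2,q)$ over the finite field $\mathbb{F}_q$, and to verify the three required properties—each day is a partition of the golfers into $q$ groups of size $q$, there are $q+1$ days, and every pair is grouped together exactly once—by elementary field arithmetic. Because $q$ is a prime power, $\mathbb{F}_q$ is a field, and this is the only place the prime-power hypothesis is used: it guarantees that nonzero elements are invertible, which is precisely what forces the ``exactly once'' property.

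First I would identify the $q^2$ golfers with the points of $\mathbb{F}_q^2$; the random bijection $\sigma$ is merely a relabeling and affects none of the incidence properties, so I would work in coordinates $(i,j)$ and apply $\sigma^{-1}$ at the end. The partitions fall into three families: for each $k \in \{1, \dots, q-1\}$ the groups are the level sets $\{(i,j) : ki + j = l\}$ for $l \in \mathbb{F}_q$ (lines of slope $-k$); the family $N_l$ fixes the second coordinate (slope $0$); and $M_l$ fixes the first coordinate (the vertical direction). I would check that each family is genuinely a partition into $q$ groups of size $q$: for fixed $k$ the map $(i,j) \mapsto ki+j$ takes each value $l$ on exactly $q$ points (choose $i$ freely, then $j = l - ki$ is forced), while $N_l$ and $M_l$ partition $\mathbb{F}_q^2$ trivially. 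This establishes that every golfer plays exactly once per day and that the group sizes are correct, and counting the families gives $(q-1) + 1 + 1 = q+1$ days.

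The heart of the argument is the ``exactly once'' property. Given two distinct golfers $(i_1,j_1)$ and $(i_2,j_2)$, I would show that exactly one group across all partitions contains both, via a case split on whether $i_1 = i_2$. If $i_1 = i_2$ then $j_1 \neq j_2$, and the pair is grouped only by $M_{i_1}$; no slope-$(-k)$ line and no $N_l$ contains both. If $i_1 \neq i_2$, then $M$ is excluded, and the pair lies in a common slope-$(-k)$ group iff $k(i_1 - i_2) = j_2 - j_1$. Since $i_1 - i_2 \neq 0$ is invertible in $\mathbb{F}_q$, this has the unique solution $k = (j_2 - j_1)(i_1 - i_2)^{-1}$; when this $k$ is nonzero it selects exactly one $L$-family, and when it is zero (i.e.\ $j_1 = j_2$) the pair is instead grouped by $N_{j_1}$ and by no $L$-family. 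In every case the pair is grouped together in precisely one partition.

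The main obstacle is purely bookkeeping: ensuring the three families of directions exactly cover the $q+1$ parallel classes with no overlap and no omission, and confirming that the slope-$0$ and vertical directions—handled by the separate loops producing $N_l$ and $M_l$ rather than by the main loop over $k$—are each counted once. Once the set of directions is seen to be in bijection with $\mathbb{F}_q \cup \{\infty\}$, the unique solvability of the linear equation over the field closes the argument, with the prime-power hypothesis entering solely through this invertibility.
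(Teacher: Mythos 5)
Your proposal is correct and follows essentially the same construction and field arithmetic as the paper: both treat the days as the $q+1$ parallel classes of lines in $\mathbb{F}_q^2$ (the $q-1$ level-set families of $L_k(i,j)=ki+j$ plus the two coordinate-fixing families $M_l$ and $N_l$), and both use invertibility of $i_1-i_2$ in the field as the crux. The one genuine difference is how the ``exactly once'' conclusion is reached. You argue it directly: for each pair you solve the linear equation $k(i_1-i_2)=j_2-j_1$ and exhibit the unique day (an $L$-family, $N$, or $M$) on which the pair meets. The paper instead proves only ``at most once'' (its Property 2, plus the separate checks that pairs sharing a coordinate never meet in an $L$-family), and then upgrades to ``exactly once'' by counting: each golfer meets $q-1$ distinct others per day over $q+1$ days, and $(q-1)(q+1)=q^2-1$ exhausts all other golfers. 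Your direct route makes the role of the prime-power hypothesis more transparent and avoids the final pigeonhole step; the paper's route trades the case split on $i_1=i_2$ versus $i_1\neq i_2$ for the counting identity. Both are complete and the bookkeeping you flag (that the slope set is in bijection with $\mathbb{F}_q\cup\{\infty\}$, each direction used once) is exactly the point where the paper's write-up is sloppiest, so your explicit attention to it is warranted.
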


\begin{proof}
 We want to prove that there exists a schedule such that $q^2$ players are split into disjoint groups of $q$ players each for $q$ days such that no two players are in the same group for more than one day.
 \par Let $\mathbb{F}_q$ denote the GF(q) for the prime number q. We pick an arbitrary bijection $\sigma:\mathbb{F}_q^2\rightarrow \{p_1, \cdots, p_{q^2}\}$  between $q^2$ players and ordered pairs $(i,j) \in \mathbb{F}_q^2.$ For all $1\leq k<q,$ let $L_k:\mathbb{F}_q^2\rightarrow\mathbb{F}_q$ be a function such that $L_{k}(i,j)=ki+j$.
 \par Now, it is easy to verify  that for all $i \in \{0, \cdots, q-1\}, k \in \{1, \cdots q-1\}$, it holds that $L_k(i, :)$ contains the elements $1, \cdots q-1$ exactly once. (Property 1)
 \par Also, notice that for any $i_1, i_2, j_1, j_2 \in \{0, \cdots, q-1\}$, and  $k \in \{1, \cdots, q-1\}$, if $L_k(i_1, j_1) = L_k(i_2, j_2)$, then $\nexists m$ s.t. $L_m(i_1, j_1) = L_m(i_2, j_2)$. (Property 2). To prove this, suppose that there exists $m$ s.t. $L_{m}(i_1,j_1)=L_{m}(i_2,j_2).$ Then we have $ki_1+j_1=ki_2+j_2$ and $mi_1+j_1=mi_2+j_2,$ so $(i_1-i_2)k=(i_1-i_2)m.$ Since $\mathbb{F}_q$ is a finite field and $k\neq m$, we must have $i_1=i_2,$ which implies that $j_1=j_2$, a contradiction.

 \par Now let $W_k$ be a list of q sets that denotes day i consisting of q groups. For $k \in \{0, \cdots q-2\}$, we set $W_k=(R_{k,1},\cdots, R_{k,q})$, where $R_{k,l}$ is the set of all tuples in the domain of $L_k$ that map to $l$, i.e. $R_{k,l}= \{\sigma^{-1}(i,j) : L_k(i,j) = l\}$. For day $q$, we set $W_q=\{N_0, \cdots N_{q-1}\}$, where $N_l=\{\sigma^{-1}(0, l),\sigma^{-1}(1, l),\cdots \sigma^{-1}(q-1, l)\}$ for all $0\leq l< q$. For day q-1, we set $W_q=\{M_0, \cdots M_{q-1}\}$, where $M_l=\{\sigma^{-1}(l, 0),\sigma^{-1}(l, 1),\cdots \sigma^{-1}(l, q-1)\}$ for all $0\leq l< q$. For day q, we set $W_q=\{N_0, \cdots N_{q-1}\}$, where $N_l=\{\sigma^{-1}(0, l),\sigma^{-1}(1, l),\cdots \sigma^{-1}(q-1, l)\}$ for all $0\leq l< q$.
 \par We will show that the collection $W_1, \cdots, W_q$ is a valid assignment scheme for the Social Golfer Problem. By Property 1, we know that for every day, every player is assigned in a group and that groups are disjoint. By Property 2, we know that for distinct indices $(i_1,j_1)\neq (i_2,j_2)$, if $L_k(i_1, j_1) = L_k(i_2, j_2)=l_1 $, then $\nexists m$ s.t. $L_m(i_1, j_1) = L_m(i_2, j_2)=l_2$.  Thus any two elements assigned to the same group in day $W_{k}$ cannot be assigned to the same group in day $W_{m}$ for all $1\leq k\neq m<q$. Lastly, note that for all $1\leq k < q$ and $0\leq a_1,a_2<q,$ we have $L_k(a_1, l)=ka_1+l\neq ka_2+l=L_k(a_2,l),$ so $\sigma_{-1}(a_1,l),\sigma_{-1}(a_2,l)\in N_l$ are not assigned to the same part in any of $W_1, \cdots W_{q-1},$ so partition $W_q$ satisfies the desired properties when added to the collection. A similar argument holds for $W_{q-1}$. Thus, our construction $W_1, \cdots, W_{q+1}$ is valid solution to the instance $SGP(q, q, q+1)$.

 We have proved that each player will play with a group of $q-1$ different players for $q+1$ days. There are $q^2$ players in total. Then, each player will play against a total of $(q-1)(q+1) = q^2-1$ different players. So every player will be in the same group with every other player exactly once.

\end{proof}

\bigthm*
\begin{proof}

\par Results (1), (2), (3) directly stem from Lemmas \ref{number of comparisons}, \ref{ver}, and \ref{redundancy} respectively. We denote with $u$ the utility of the principal and calculate its expected value given that she sets the payment $p=p^*$. Recall that $u_{\mu^*}$ is a constant multiple of the number of pairwise comparisons that are returned by our algorithm.
\begin{align*}
    \mathbb{E}[u | p = p^*] = \mathbb{E} \left[ u_{\mu^*} - \bar{\psi} |V| - p^* \sum_{i=1}^{s} \mathbbm{1}\{t_i = \text{``good''}\} - p^{*} \sum_{i=1}^{s}\mathbbm{1}\{t_i = \text{``bad''}, c_i = 0\}  \mid p=p^{*} \right] 
\end{align*}

By Theorem \ref{p^*} we know that if the principal sets the price to $p = p^*$ the probability that any agent is ``good'' is $\pi$. By Lemma \ref{ver} we also know that the probability that any agent is ``bad'' and not identified is $\frac{1-\pi}{2^{|V|}}$. Therefore, we can write:
\begin{align*}
    \mathbb{E}[u | p = p^*] = \mathbb{E}[u_{\mu^*}] - \bar{\psi} |V| -p^* s\pi - p^{*} \frac{s(1-\pi)}{2^{|V|}} 
\end{align*}
where here we define $|V|$ according to Lemma \ref{ver}.
Finally, by Lemma ~\ref{redundancy}, with probability at least $1-\delta$, algorithm \textsc{CrowdSort} returns all $\binom{n}{2}$ ground-truth pairwise comparisons. Therefore, $\mathbb{E}[u_{\mu^*}] \geq (1-\delta)\lambda \binom{n}{2}$ for some $\lambda \in \mathbb{R^{+}}$. So, we get that:
\begin{align*}
    \mathbb{E}[u | p = p^*] \geq \lambda(1-\delta)\binom{n}{2} - \bar{\psi} |V| -p^* s\pi - p^{*} \frac{s(1-\pi)}{2^{|V|}} 
\end{align*}
\par If the principal decides to set the price to 0, then no agents will induce effort and the principal will need to perform all pairwise comparisons on her own. The expected number of comparisons she will need to perform is $2n\log{n}$. Therefore, the disutility of the principal in this case is:
\begin{align*}
    \mathbb{E}[u | p=0] = \lambda \binom{n}{2}-2\bar{\psi}n\log{n}
\end{align*}

For the principal to decide to propose a contract, her expected utility when inducing effort from the agents must be greater than or equal to her utility when she does not. In other words, we require that:
\begin{align*}
    \mathbb{E}[u | p = p^*] \geq \mathbb{E}[u | p = 0]
\end{align*}
We substitute the values of the utilities found above along with the appropriate values for $|V| =  \log \left( \frac{2(1-\pi)s}{\delta}\right)$ to get:
\begin{align*}
    \lambda(1-\delta)\binom{n}{2}-  \bar{\psi}  \log \frac{2(1-\pi)s}{\delta}  -p^* s\pi - p^{*} \frac{s(1-\pi)}{\frac{2(1-\pi)s}{\delta}}  \geq  \lambda\binom{n}{2} -2\bar{\psi}n\log{n}
\end{align*}
which holds if and only if:
\begin{align*}
     -\lambda\delta\binom{n}{2}-p^* \left( s\pi +\frac{\delta}{2}\right) \geq  \bar{\psi}\left(\log{\frac{2(1-\pi)s}{\delta}}-2n\log{n}\right)
\end{align*}
We substitute the values $p^* = \frac{d \psi}{\hat{\pi}\pi}$ and $\hat{\pi} = 1-1/2^{|V|}$ to get:
\begin{align} \label{condition}
    \lambda\delta\binom{n}{2}+\frac{d\psi}{1-\frac{\delta}{2(1-\pi)s}} \left( s +\frac{\delta}{2\pi} \right) \leq  \bar{\psi}\left(2n\log{n} - \log{\frac{2(1-\pi)s}{\delta}}\right)
\end{align}
By Remark \ref{rmrk}, we know that $d = \mathcal{O}(n^{3/2}\log^2{n}/s)$ and assuming that $\delta < (1/n^2)$ and that $\pi$ is bounded away from $1$, we get that:
\begin{align*}
    \bar{\psi} /\psi \geq  \Omega\left(\frac{d s}{n\log{n}} \right) = \Omega(\sqrt{n}\log{n})
\end{align*}

\end{proof}

\section{Omitted proofs from Section 3}\label{appendixb}
\paymentunk*
\begin{proof}
We denote agent $a_i$'s utility as $u_i$ and following a similar analysis as in Lemma~\ref{paymentthm} we have that:
\begin{align*}
    \mathbb{E}[u_i| e_{i}=1]  = p_g\pi + p_g(1-\hat{\pi})(1-\pi) - |C_i|\psi_i
\end{align*} 
We similarly handle the second case, in which the agent does not exert effort, this time noting that the agent does not experience disutility:
\begin{align*}
    \mathbb{E}[u_i| e_{i}=0] =  p_g(1-\hat{\pi})
\end{align*} 
For agent $a_i$ to be incentivized to exert effort, the payment needs to be such that:
\begin{align*}
    \mathbb{E}[u_i|e_i =1]  \geq \mathbb{E}[u_i|e_i =0].
\end{align*} 
Using $\mathbb{E}[|C_i|] < d$ and simplifying we get the lower bound:
\begin{align} \label{eq: conndition}
    p_g \geq \frac{d\psi_i}{\hat{\pi}\pi}
\end{align}
To satisfy individual rationality, we require the following:
\begin{align*}
    &\mathbb{E}[u_i| e_{i}=1] \geq 0
\end{align*}
which holds if and only if:
\begin{align*}
p_g \geq \frac{\mathbb{E}[|C_i|]\psi_i}{1+\hat{\pi}\pi 
 - \hat{\pi}} 
\end{align*}
which is always satisfied as long as constraint \ref{eq: conndition} is satisfied.
\end{proof}

\numincentivizedagents*
\begin{proof}
Let $p_g$ be the payment function when the principal decides to incentivize $g$ agents. Then any agent exerts effort independently of all other agents if and only if $p_g \geq \frac{d\psi_i}{\hat{\pi}\pi}$. Equivalently, this happens if and only if
\begin{align*}
    \psi_i \leq \frac{p_g \hat{\pi}\pi}{d}.
\end{align*}

Thus, each agent exerts effort with probability $ q= F\left(\frac{p_g \hat{\pi}\pi}{d}\right)$ and the realized number of incentivized agents, $\hat{g}$, follows a Binomial$(s,q)$ distribution. 

We want the probability that an agent is incentivized to be at most $q \leq g/s + \epsilon$, so we require:
\begin{align}\label{2}
     F\left(\frac{p_g \hat{\pi}\pi}{d} \right) \leq \frac{g}{s} + \epsilon
\end{align}
Since we only have access to $\hat{F}$, which is increasing, it suffices that:
\begin{align*}
     \hat{F}\left(\frac{p_g \hat{\pi}\pi}{d} \right)  \leq \frac{g}{s}
\end{align*}
which holds if:
    $p_g \leq \frac{d}{\hat{\pi}\pi}\tilde{F}_{N}\left(\frac{g}{s} \right).$

\end{proof}

\versecond*
\begin{proof}
We have that for any agent $a_i$:
\begin{align*}
    &\mathbb{P}[t_i = \text{``bad''}] 
    \\&= \mathbb{P}[t_i = \text{``bad''}|e_i = 1] \cdot \mathbb{P}[e_i=1] + \mathbb{P}[e_i=0] \\
    & = (1-\pi) \cdot \mathbb{P}\left[\psi_i \leq \frac{\hat{\pi}\pi p_{g}}{d}\right] + 1-\mathbb{P}\left[\psi_i \leq \frac{\hat{\pi}\pi p_{g}}{d}\right] \\
    & = 1-\pi \cdot \mathbb{P}\left[\psi_i \leq \frac{\hat{\pi}\pi p_{g}}{d}\right].
\end{align*}

By Remark \ref{ub}, we know that for any agent $a_i$ it holds that $\mathbb{P}\left[\psi_i \leq \frac{\hat{\pi}\pi p_{g}}{d}\right] \geq \frac{g}{s}-\epsilon$, which means that:
\begin{align} \label{eq9}
    \mathbb{P}[t_i = \text{``bad''}] &\leq  1-\pi \frac{g}{s} +\pi \epsilon
\end{align}
We also know that $\mathbb{P}[i \not\in B | t_i = \text{``bad''}] = 1/2^{|V|}$, so by taking the union bound over all agents, we have:
    \begin{align} \label{3}
        \mathbb{P}[\exists i\ \text{s.t.}\ (t_{i}=\text{``bad''} \cap i\not\in B)] &\leq \frac{s}{2^{|V|}} \cdot \left(1- \pi \frac{g}{s} +\pi \epsilon\right)
    \end{align}
    We then upper bound this probability by $\delta/2$ and solve for $|V|,$ giving us the following lower bound:
\begin{align}\label{eq10}
        |V| \geq \log_2 \left( \frac{2(s-\pi g +\pi s \epsilon)}{\delta} \right).
\end{align}
\end{proof}

\redundancysecond*
\begin{proof}
We know from Lemma \ref{ver2} that if $|V| \geq \log_2 \left( \frac{2(s-\pi g +\pi s \epsilon)}{\delta} \right)$, then with high probability, the algorithm $\textsc{CrowdSort}(T, s, v, r)$ identifies all ``bad''  agents. Therefore, it suffices to assign each comparison to at least one ``good''  agent for the true value of the comparison to be returned. By Remark \ref{ub}, we know that for any agent $a_i$, $\mathbb{P}_{\psi_i \sim \mathcal{D}}\left[\psi_i \leq \frac{\hat{\pi}\pi p_{g}}{d}\right] \geq \frac{g}{s}-\epsilon.$ So the probability that all $r$ agents assigned to a single comparison are ``bad'' is at most: 
\begin{align} \label{4}
    \mathbb{P}( t_i =  \text{bad} \; \forall a_i \in [r]) &\leq \left(1-\frac{\pi g}{s} +\pi \epsilon \right)^r.
\end{align}
Taking the union bound over all comparisons, we have that the probability there exists a comparison such that all agents assigned to it are ``bad'' is at most $n^2 \left(1-\frac{\pi g}{s} +\pi \epsilon\right)^r.$
Upper-bounding the above probability by $\delta/2$,  we get that:
\begin{align} \label{r}
    r \geq \frac{\log_2 \left( \frac{\delta}{2n^2}\right)}{\log_2 \left( 1-\frac{\pi g}{s} +\pi \epsilon \right)}
\end{align}
Finally, notice that the above inequality and Inequality~\eqref{3} hold simultaneously with probability at least $1-\delta/2$.

Now let $E_{1}$ be the event that $B = \{i | \forall a_i$ s.t. $t_i =$``bad''$\}$ and  $E_2$ be the event that every pairwise comparison is assigned to at least one ``good'' agent. For \textsc{CrowdSort}($T, s, v, r$) to return the items' ground truth ordering with probability at least $1-\delta$, it suffices that all pairwise comparisons are assigned to at least one ``good'' agent and that all comparisons performed by ``bad'' agents are eliminated i.e. all ``bad'' agents are identified. This holds if $\mathbb{P}[ E_1 \cap E_2] \geq 1-\delta$. 
\par By Lemma \ref{ver2} we have that $\mathbb{P}[E_{1}] \geq 1-\frac{\delta}{2}$. We also know that if \ref{r} holds, then $\mathbb{P}[E_{2}] \geq 1-\frac{\delta}{2}$. We want to show that $\mathbb{P}[E_{1} \cap E_{2}] \geq 1-\delta$. It suffices to show that $\mathbb{P}[\neg E_{1} \cup \neg E_{2}] \leq \delta$. This holds because  by the union bound $\mathbb{P}[\neg E_{1} \cup \neg E_{2}] \leq \mathbb{P}[\neg E_{1}] + \mathbb{P}[E_{2}] \leq 2\frac{\delta}{2} = \delta$.
\end{proof}

\optg*

\begin{proof} 
    Results (1), (2), (3) directly stem from Lemmas \ref{number of comparisons}, \ref{ver2}, and \ref{redundancy2} respectively. We denote as $v$ the utility of the principal and calculate its expected value given that she decides to incentivize at least $g$ agents and set the payment $p=p_g$. Recall that $u_{\mu^*}$ is the utility of the principal if the resulting sorting of the items coincides with the ground-truth permutation $\mu^*$. Here, $u_\mu^*$ scales with the number of comparisons the principal retrieves from the agents.
\begin{align*}
    \mathbb{E}[u | p = p_{g}] = \mathbb{E} \left[ u_{\mu^*} - \bar{\psi} |V| - p_g \sum_{i=1}^{s} \mathbbm{1}\{t_i = \text{``good''}\}  - p_g\sum_{i=1}^{s}\mathbbm{1}\{t_i = \text{``bad''}, c_i = 0\}  \mid p=p_g \right]
\end{align*}
By Remark \ref{ub} we know that if the principal sets the price to $p = p_{g}$ then with probability at least $1-\delta/4$ the probability that any given agent is ``good'' is at most $\pi(\frac{g}{s}+\epsilon)$. By Lemma \ref{ver2} we also know that with probability at least $1-\delta/4$ the probability that any agent is ``bad'' and not identified is at most $\frac{1}{2^{|V|}}(1-\frac{\pi g}{s}+\pi \epsilon)$. We substitute $u_{\mu^*} \geq \lambda(1-\delta)\binom{n}{2}$ for some $\lambda \in  \mathbb{R}^{+}$ by Lemma \ref{redundancy2} . Therefore, we have that:
\begin{align*}
    & \mathbb{E}[u | p = p_{g}] \geq \lambda(1-\delta)\binom{n}{2} - \bar{\psi} |V| -sp_{g}\left(\left(1-\frac{\delta}{4}\right)\pi\left(\frac{g}{s}+\epsilon\right) +\frac{\delta}{4} \right) -  sp_g \left(\left(1-\frac{\delta}{4}\right)\frac{1}{2^{|V|}}\left(1-\pi \frac{g}{s}+\pi \epsilon\right)+\frac{\delta}{4}\right)
\end{align*}

where here we take $|V|$ satisfying the conditions from Lemma \ref{ver2} and using the fact that $\hat{\pi} = \mathbb{P}[c_{i}=1 | t_{i}=\text{``bad''}] = 1-1/2^{|V|}$.

Let $E(p_g)$ denote the lower bound above. For the algorithm to be incentive-compatible for the principal, it must hold that:
\begin{align*}
    \mathbb{E}[u | p = p_{g}] \geq E(p_g)\geq\mathbb{E}[v | p = 0] = u_{\mu^*} - 2\bar{\psi}n\log{n}
\end{align*}
where the first inequality follows from above. Therefore the optimization problem for a suitable $g^*$ is the following:
\begin{align*}
    g^* &= \arg \max_{g\in[s]} \; E(p_g) 
\end{align*}

such that:

\begin{align*}
    \left.\begin{aligned}
 |V| &\geq \log_2 \left( \frac{2(s-\pi g +\pi s \epsilon)}{\delta} \right)
\\  r_g &\geq \frac{\log_2 \left( \frac{\delta}{2n^2}\right)}{\log_2 \left( 1-\frac{\pi g}{s} +\pi \epsilon\right)}
    \end{aligned}\right\} \text{correctness of algorithm}
\end{align*}

\begin{align*}
\left.\begin{aligned}
    p_g = \frac{d}{\hat{\pi}\pi}\tilde{F}_{N}\left(\frac{g}{s}\right)
\end{aligned}\right\} \text{relaxed incentive compatibility for agents}
\end{align*}

\begin{align*}
\left.\begin{aligned}
E(p_g) \geq u_{\mu^*} - 2\bar{\psi}n\log{n}
\end{aligned}\right\} \text{incentive compatibility for principal}
\end{align*}

Notice that the objective is increasing in $|V|$ and in $p_g$ (therefore also in $d$). Thus, the inequality constraints above can be turned into equality constraints, fixing the values of $|V|$ and $r$ to their minimum feasible values without changing the outcome of the optimization. Therefore, the above problem can be solved in $O(s)$ time by iterating over all $g \in [s]$.

\end{proof}
\section{Additional Information about Experiments} \label{appendixexperiments}

\subsection{Post-processing algorithm}

\begin{algorithm}[H] 
    \caption{Post-Process(comparisons, identified)} \label{postprocess}
    \begin{algorithmic}[1] 
    \State condensed = \{\}
    \State conflicts = \{\}
    \For{agent $\in$ s}
        \If{agent not in identified}
            \For{``$i < j$" in comparisons[agent]}
                \If{``$i < j$" in condensed or ``$i < j$" in conflicts}
                    \State continue
                \EndIf
                \If {``$j < i$" in condensed}
                    \State conflicts $\leftarrow$ conflicts $\cup$ \{``$j < i$", ``$i < j$"\}
                    \State  condensed $\leftarrow$ condensed $\setminus$ \{``$i < j$"\}
                \Else
                    \State condensed $\leftarrow$ condensed $\cup$ \{``$i < j$"\}
                \EndIf
            \EndFor
        \EndIf
    \EndFor
    \State Return condensed
\end{algorithmic}
\end{algorithm}

\par Algorithm \ref{postprocess} takes as input ``comparisons'', which is a list of $s$ sublists, each containing the returned results of the binary comparisons performed by an agent. Each result is represented with a string of the form ``$i < j$'', representing that according to the agent, the true score of item $T_i$ is smaller than that of item $T_j$. The input ``identified'' is a set containing the indeces of all agents who returned at least one verified comparison incorrectly. 
\par Algorithm \ref{postprocess} iterates through the comparisons returned by agents not in the set ``identified'' and adds them to the set ``condensed'', which will be returned at the end. If the comparison is already in the set ``condensed'', then it is not added again. If there has been another agent whose returned comparison disagrees with it, then the comparison is added to the set ``conflicts'' (if not already added), and is removed from the set ``condensed''. Consequently, if there is at least one agent that disagrees on a comparison with all other agents assigned the same comparison, then algorithm \ref{postprocess} will completely disregard it. Finally, algorithm \ref{postprocess} returns the set condensed, which has size at most  $\binom{n}{2}$.

\subsection{Principal's utility vs. parameter $\pi$}
The plot shows a monotonic increase in the principal's utility as $\pi$ increases. The shaded region corresponds to a 90\% confidence interval across 15 iterations. We notice that for any value of $\pi$ that is greater than $0.35$, it is in the principal's best interest to implement our contract instead of performing all pairwise comparisons herself.

\begin{figure}[h]
   \centering
  \includegraphics[width=0.4\textwidth]{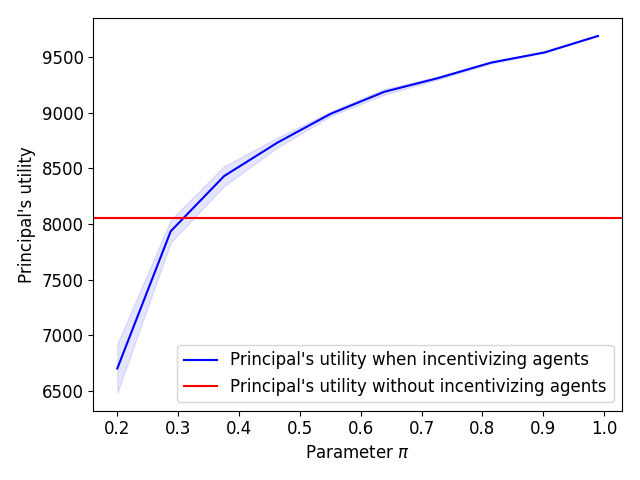}
   \caption{Principal's utility as a function of parameter $\pi$. 
    }
   \label{fig:utilityvspi}
\end{figure}

\subsection{Additional Hyper-parameter settings from Section \ref{sec:experiments}}
For the remainder of this section, we fix the hyper-parameter settings found in Section \ref{sec:experiments}, unless otherwise stated. Namely, we assume that $\pi = 0.8$, $\delta = 0.01$, $n=100$, $s = 100$, $\bar{\psi} = 2$, and $ \psi = 0.01$.

\smallskip
\emph{Discussion of Figure \ref{extrautilityvspsi1}.} 
In Figure \ref{extrautilityvspsi1} we provide additional hyper-parameter settings for Figure \ref{fig:utilityvspsi} in Section \ref{sec:experiments}. First, we select the probability of an effortful agent being ``good'' to be a value in $\{0.3, 0.6, 0.9\}$ and the number of agents and items to be in the set $\{50, 200\}$. In plots \ref{fig:utilityvspsi0.350}-\ref{fig:utilityvspsi0.950} results are averaged across 15 trials, while in plots \ref{fig:utilityvspsi0.3200}- \ref{fig:utilityvspsi0.9200}, results are averaged across 10 trials. The blue shading corresponds to a 90\% confidence interval. We notice that the confidence intervals decrease with the number of agents, even for a small number of trials. The confidence intervals also decrease, as the agents' work becomes more predictable ($\pi$ increases).

We find that as $\pi$ increases, the principal's utility also increases along with the threshold for $\psi$, below which it is in the principal's best interest to run \textsc{CrowdSort} than perform all comparisons herself. This threshold also decreases with the number of agents, $s$. This implies that there is a sweet spot, in which the expected number of ``good'' agents is just enough that the ground-truth ordering of the items is retrieved and it is sufficiently small so that the principal can pay all of them.
\begin{figure}[h]
\centering
\begin{subfigure}{0.32\textwidth}
    \includegraphics[width=\textwidth]{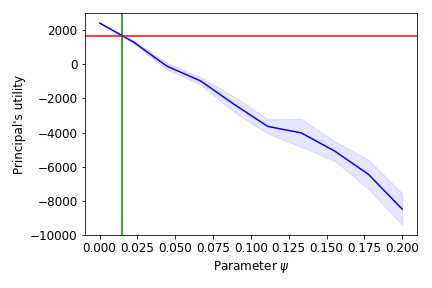}
    \caption{$\pi = 0.3$ and $s = 50$}
    \label{fig:utilityvspsi0.350}
\end{subfigure}
\hfill
\begin{subfigure}{0.32\textwidth}
    \includegraphics[width=\textwidth]{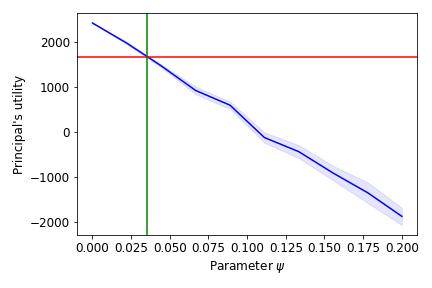}
    \caption{$\pi = 0.6$ and $s = 50$}
    \label{fig:utilityvspsi0.650}
\end{subfigure}
\hfill
\begin{subfigure}{0.32\textwidth}
    \includegraphics[width=\textwidth]{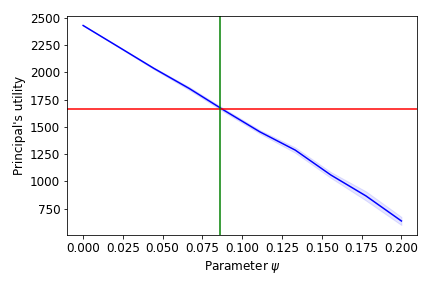}
    \caption{$\pi = 0.9$ and $s = 50$}
    \label{fig:utilityvspsi0.950}
\end{subfigure}
\hfill
\begin{subfigure}{0.32\textwidth}
    \includegraphics[width=\textwidth]{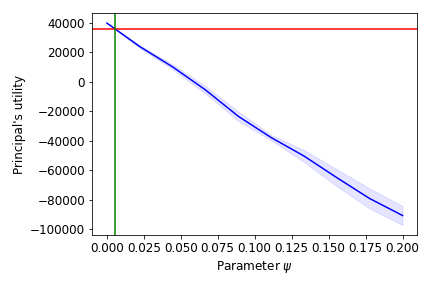}
    \caption{$\pi = 0.3$ and $s = 200$}
    \label{fig:utilityvspsi0.3200}
\end{subfigure}
\hfill
\begin{subfigure}{0.32\textwidth}
    \includegraphics[width=\textwidth]{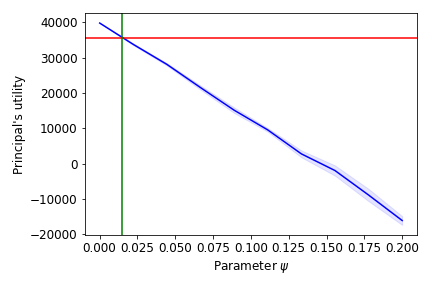}
    \caption{$\pi = 0.6$ and $s = 200$}
    \label{fig:utilityvspsi0.6200}
\end{subfigure}
\hfill
\begin{subfigure}{0.32\textwidth}
    \includegraphics[width=\textwidth]{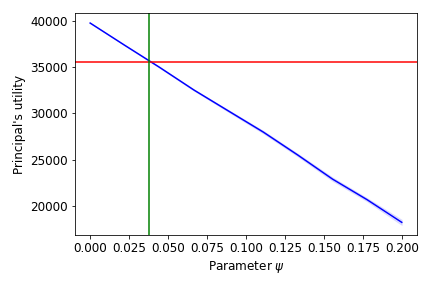}
    \caption{$\pi = 0.9$ and $s = 200$}
    \label{fig:utilityvspsi0.9200}
\end{subfigure}
\caption{Principal's utility as a function of parameter $\psi$ (in the absence of noise)}
\label{extrautilityvspsi1}
\end{figure}

\smallskip
\emph{Discussion of Figure \ref{extrautilityvspsi2}.} 
In Figure \ref{extrautilityvspsi2} we provide additional hyper-parameter settings for Figure \ref{fig:utilityvspsi} in Section \ref{sec:experiments}. Specifically, we select the per-comparison disutility of the principal, $\bar{\psi}$, along with the per-comparison utility of the principal, $\lambda$, to take values in $\{4, 6, 8\}$. The number of agents and items is in the set $\{200, 500\}$ and our results are averaged across 2 trials with the blue shading corresponding to a 90\% confidence interval.

We notice that the utility of the principal dramatically increases with the value she assigns to the correct ordering. Interestingly, by comparing Figure \ref{fig:utilityvspsi0.9200} to Figures \ref{fig:utilityvspsi4200}-\ref{fig:utilityvspsi4200}, we find that even when the reliability of the agents is smaller ($\pi = 0.8$ in \ref{fig:utilityvspsi0.9200} and $\pi = 0.9$ in \ref{fig:utilityvspsi4200}-\ref{fig:utilityvspsi8200}), the principal's point of indifference is higher when her disutility is larger. As before, we notice that when a large number of agents is incentivized, their labor needs to be sufficiently inexpensive (smaller threshold on $\psi$) for their work to be more profitable to the principal than performing all pairwise comparisons herself.
Lastly, our results show, that for a broad range of hyperparameters, our theoretical analysis accurately predicts the value of $\psi$, for which the principal is indifferent between running \textsc{CrowdSort} or performing all comparisons herself.

\begin{figure}[h]
\centering
\begin{subfigure}{0.32\textwidth}
    \includegraphics[width=\textwidth]{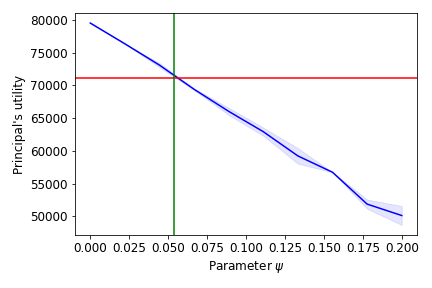}
    \caption{$\bar{\psi} = 4$ and $s = 200$}
    \label{fig:utilityvspsi4200}
\end{subfigure}
\hfill
\begin{subfigure}{0.32\textwidth}
    \includegraphics[width=\textwidth]{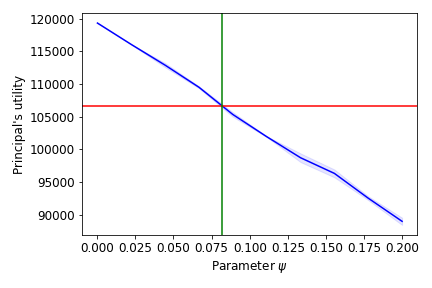}
    \caption{$\bar{\psi} = 6$ and $s = 200$}
    \label{fig:utilityvspsi6200}
\end{subfigure}
\hfill
\begin{subfigure}{0.32\textwidth}
    \includegraphics[width=\textwidth]{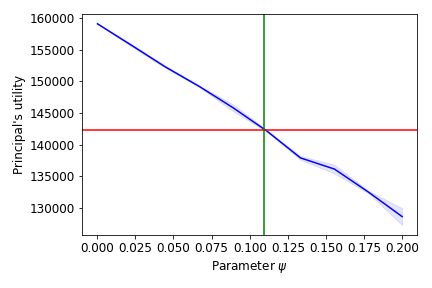}
    \caption{$\bar{\psi} = 8$ and $s = 200$}
    \label{fig:utilityvspsi8200}
\end{subfigure}
\hfill
\begin{subfigure}{0.32\textwidth}
    \includegraphics[width=\textwidth]{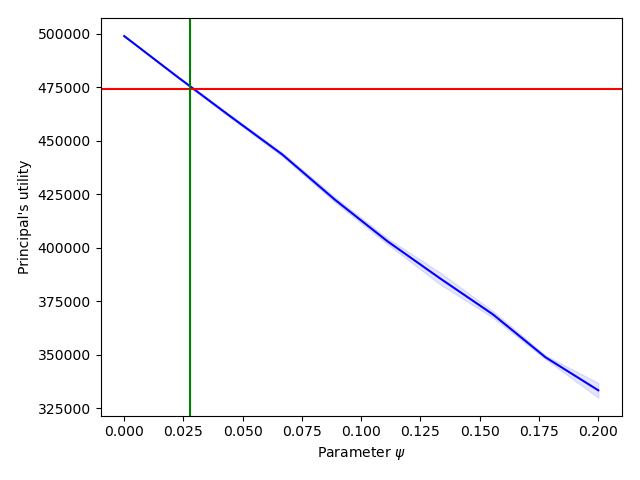}
    \caption{$\bar{\psi} = 4$ and $s = 500$}
    \label{fig:utilityvspsi4500}
\end{subfigure}
\hfill
\begin{subfigure}{0.32\textwidth}
    \includegraphics[width=\textwidth]{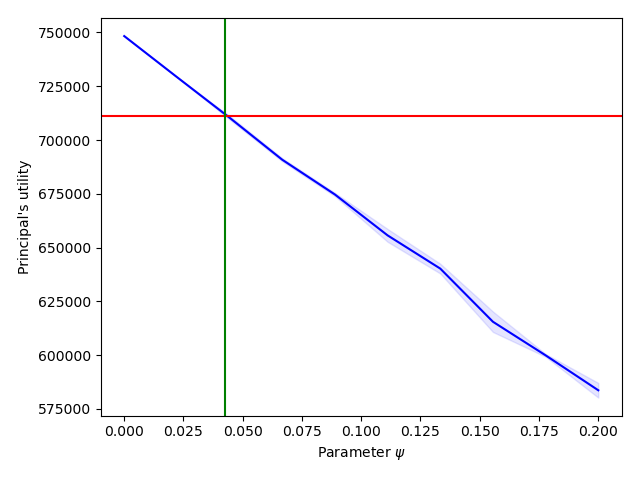}
    \caption{$\bar{\psi} = 6$ and $s = 500$}
    \label{fig:utilityvspsi6500}
\end{subfigure}
\hfill
\begin{subfigure}{0.32\textwidth}
    \includegraphics[width=\textwidth]{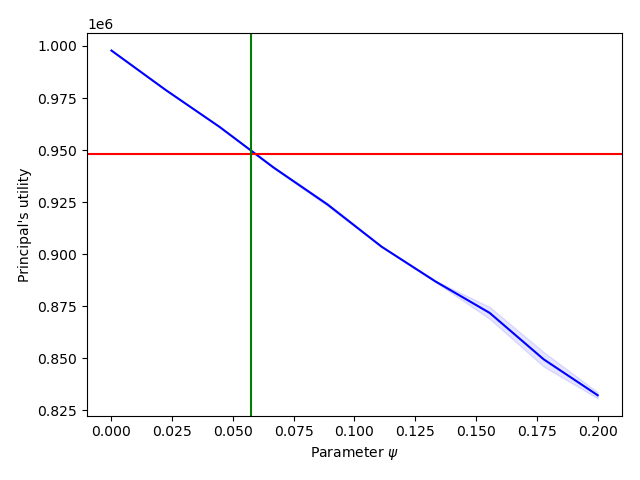}
    \caption{$\bar{\psi} = 8$ and $s = 500$}
    \label{fig:utilityvspsi8500}
\end{subfigure}
\caption{Principal's Utility as a function of parameter $\psi$ (in the absence of noise)}
\label{extrautilityvspsi2}
\end{figure}

\begin{figure}[h]
\centering
\begin{subfigure}{0.32\textwidth}
    \includegraphics[width=\textwidth]{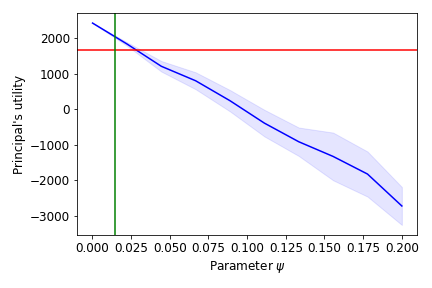}
    \caption{$\pi = 0.3$ and $s = 50$}
    \label{fig:approxutilityvspsi0.350}
\end{subfigure}
\hfill
\begin{subfigure}{0.32\textwidth}
    \includegraphics[width=\textwidth]{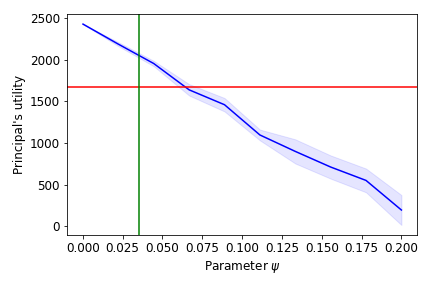}
    \caption{$\pi = 0.6$ and $s = 50$}
    \label{fig:approxutilityvspsi0.650}
\end{subfigure}
\hfill
\begin{subfigure}{0.32\textwidth}
    \includegraphics[width=\textwidth]{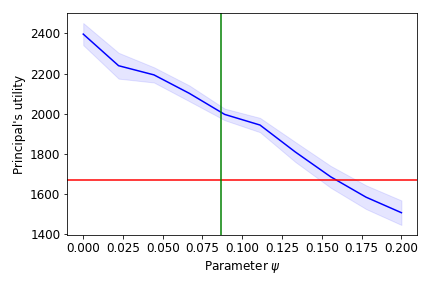}
    \caption{$\pi = 0.9$ and $s = 50$}
    \label{fig:approxutilityvspsi0.950}
\end{subfigure}
\hfill
\begin{subfigure}{0.32\textwidth}
    \includegraphics[width=\textwidth]{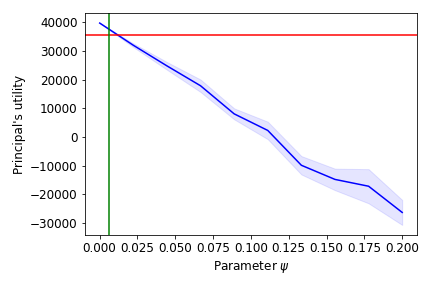}
    \caption{$\pi = 0.3$ and $s = 200$}
    \label{fig:approxutilityvspsi0.3200}
\end{subfigure}
\hfill
\begin{subfigure}{0.32\textwidth}
    \includegraphics[width=\textwidth]{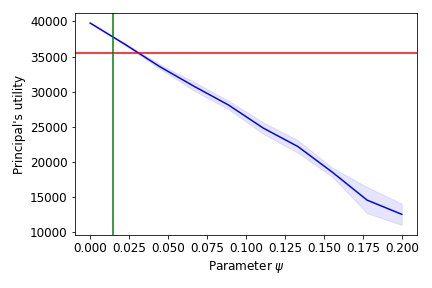}
    \caption{$\pi = 0.6$ and $s = 200$}
    \label{fig:approxutilityvspsi0.6200}
\end{subfigure}
\hfill
\begin{subfigure}{0.32\textwidth}
    \includegraphics[width=\textwidth]{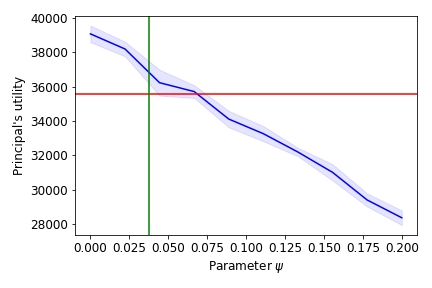}
    \caption{$\pi = 0.9$ and $s = 200$}
    \label{fig:approxutilityvspsi0.9200}
\end{subfigure}
\caption{Principal's utility as a function of parameter $\psi$ (with noise)}
\label{fig:extraapproxutilityvspsi}
\end{figure}

\smallskip
\emph{Discussion of Figure \ref{fig:extraapproxutilityvspsi}.}
We provide additional parameter settings corresponding to Figure \ref{fig:approxutilityvspi}. The parameters $\pi$ and $s$ along with the number of trials vary as in Figure \ref{extrautilityvspsi1}. 
We find that as the number of agents to be incentivized grows, our theoretical analysis for the principal's point of indifference becomes more accurate. Also, noise on the disutilites results in many agents not being incentivized to exert effort and therefore get paid. This increases the principal's utility, especially when parameter $\pi$ is large, therefore resulting in a less steep slope. As in Figure \ref{fig:approxutilityvspi}, we note that the confidence intervals become larger with $\psi$, as the principal's utility shows higher variance, when less agents are ``good''.

\smallskip
\emph{Discussion of Figures \ref{fig:extracompsvspsi} and \ref{fig:extracompsvspi}.}
We provide additional parameter settings corresponding to Figure \ref{fig:comps}. In Figure \ref{fig:extracompsvspsi}, we select the probability of an effortful agent being ``good'' to be a value in $\{0.3, 0.6, 0.9\}$, while in Figure \ref{fig:extracompsvspi}, we select the disutility of an effortful agent to be a value in $\{0.02, 0.04, 0.06\}$. The number of agents $s$ is in the set $\{50, 200\}$. In plots \ref{fig:compsvspsi0.350}-\ref{fig:compsvspsi0.950}  and 9a-9c results are averaged across 10 trials, while in plots \ref{fig:compsvspsi0.3200}-\ref{fig:compsvspsi0.9200} and \ref{fig:compsvspi0.02200}-\ref{fig:compsvspi0.06200}, results are averaged across 5 trials. 
In Figure \ref{fig:extracompsvspsi}, we see that the payment required to incentivize agents increases with the number of items, which amplifies the effect of noisy disutilites on the number of comparisons returned. Also, we find that more reliable agents (higher values of $\pi$) significantly reduce the effect of noise on the number of comparisons returned as they converge faster to the optimal $\binom{n}{2}$. Comparing Figures \ref{fig:extracompsvspsi} and \ref{fig:extracompsvspi}, we find that the noisy values of $\pi$ have a smaller impact on the number of comparisons returned compared to noisy values of $\psi$.
  
\begin{figure}[h]
\centering
\begin{subfigure}{0.32\textwidth}
    \includegraphics[width=\textwidth]{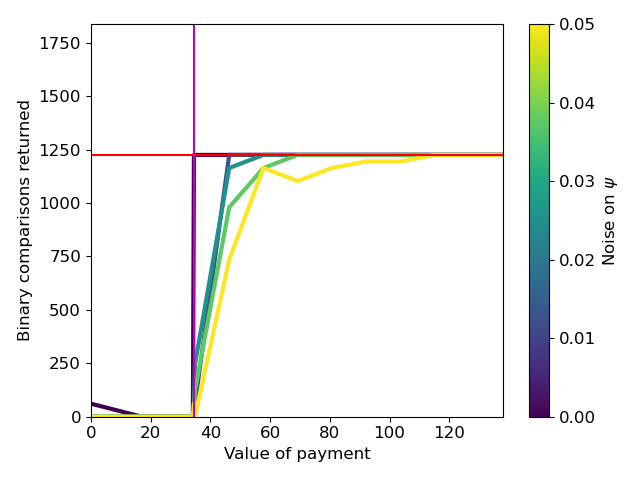}
    \caption{$\pi = 0.3$ and $s = 50$}
    \label{fig:compsvspsi0.350}
\end{subfigure}
\hfill
\begin{subfigure}{0.32\textwidth}
    \includegraphics[width=\textwidth]{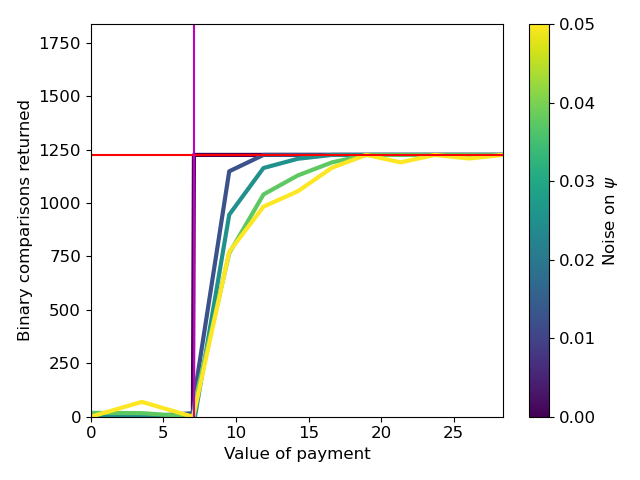}
    \caption{$\pi = 0.6$ and $s = 50$}
    \label{fig:compsvspsi0.650}
\end{subfigure}
\hfill
\begin{subfigure}{0.32\textwidth}
    \includegraphics[width=\textwidth]{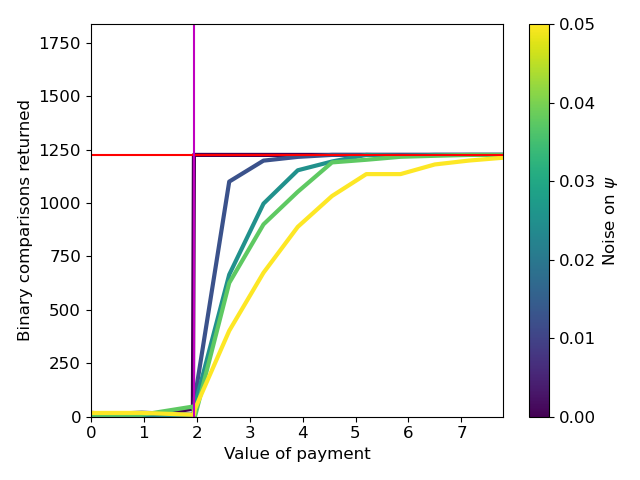}
    \caption{$\pi = 0.9$ and $s = 50$}
    \label{fig:compsvspsi0.950}
\end{subfigure}
\hfill
\begin{subfigure}{0.32\textwidth}
    \includegraphics[width=\textwidth]{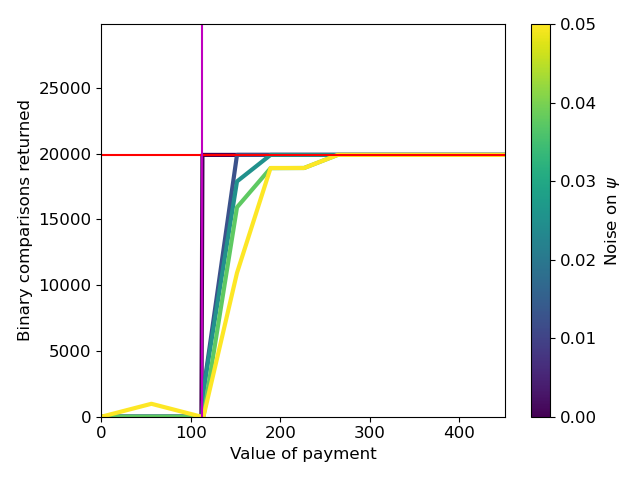}
    \caption{$\pi = 0.3$ and $s = 200$}
    \label{fig:compsvspsi0.3200}
\end{subfigure}
\hfill
\begin{subfigure}{0.32\textwidth}
    \includegraphics[width=\textwidth]{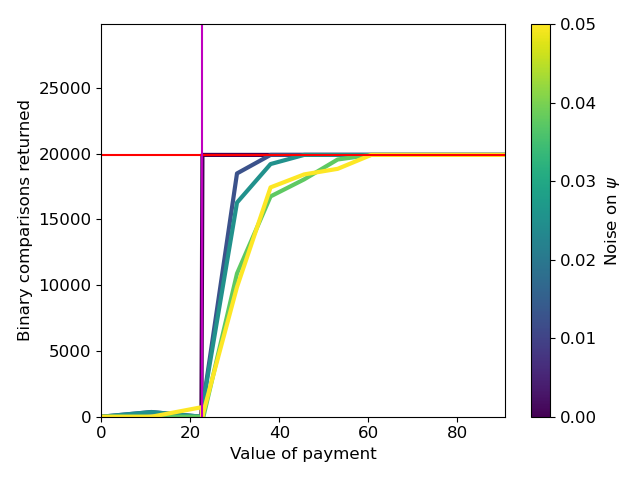}
    \caption{$\pi = 0.6$ and $s = 200$}
    \label{fig:compsvspsi0.6200}
\end{subfigure}
\hfill
\begin{subfigure}{0.32\textwidth}
    \includegraphics[width=\textwidth]{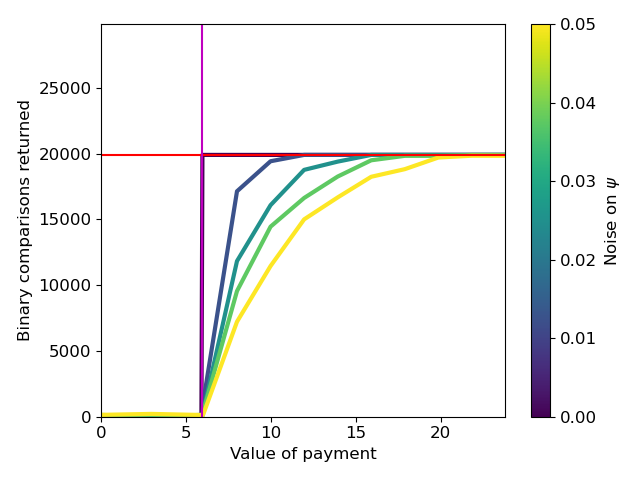}
    \caption{$\pi = 0.9$ and $s = 200$}
    \label{fig:compsvspsi0.9200}
\end{subfigure}
        
\caption{Number of comparisons returned as a function of parameter $\psi$}
\label{fig:extracompsvspsi}
\end{figure}

\begin{figure}[h]
\centering
\begin{subfigure}{0.32\textwidth}
    \includegraphics[width=\textwidth]{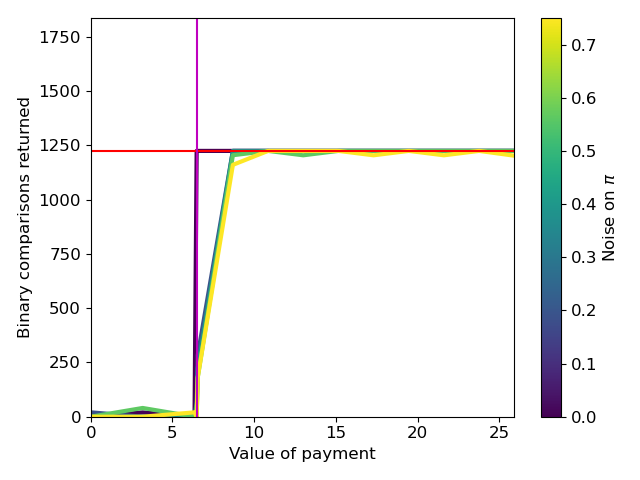}
    \caption{$\psi = 0.02$ and $s = 50$}
    \label{fig:compsvspi0.0250}
\end{subfigure}
\hfill
\begin{subfigure}{0.32\textwidth}
    \includegraphics[width=\textwidth]{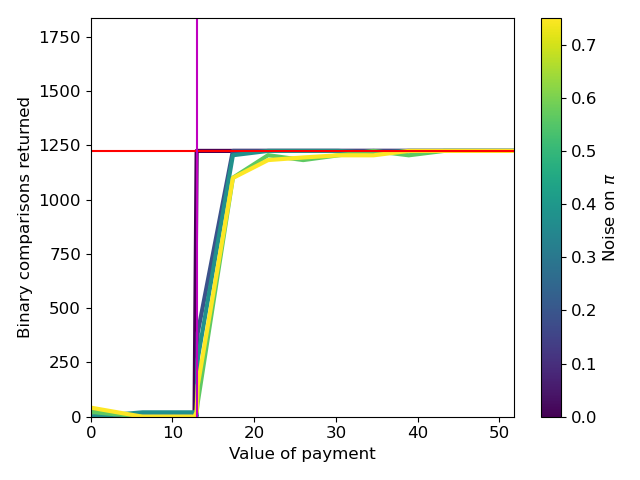}
    \caption{$\psi = 0.04$ and $s = 50$}
    \label{fig:compsvspi0.0450}
\end{subfigure}
\hfill
\begin{subfigure}{0.32\textwidth}
    \includegraphics[width=\textwidth]{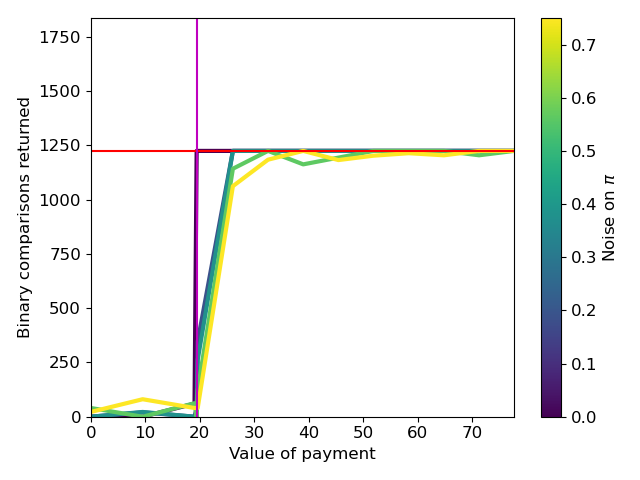}
    \caption{$\psi = 0.06$ and $s = 50$}
    \label{fig:compsvspi0.0650}
\end{subfigure}
\hfill
\begin{subfigure}{0.32\textwidth}
    \includegraphics[width=\textwidth]{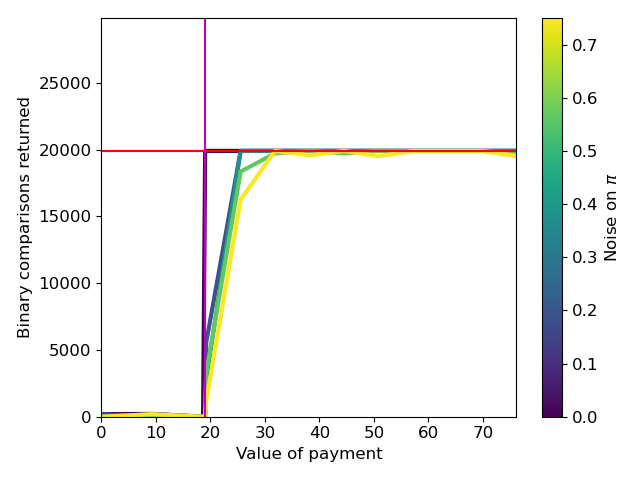}
    \caption{$\psi = 0.02$ and $s = 200$}
    \label{fig:compsvspi0.02200}
\end{subfigure}
\hfill
\begin{subfigure}{0.32\textwidth}
    \includegraphics[width=\textwidth]{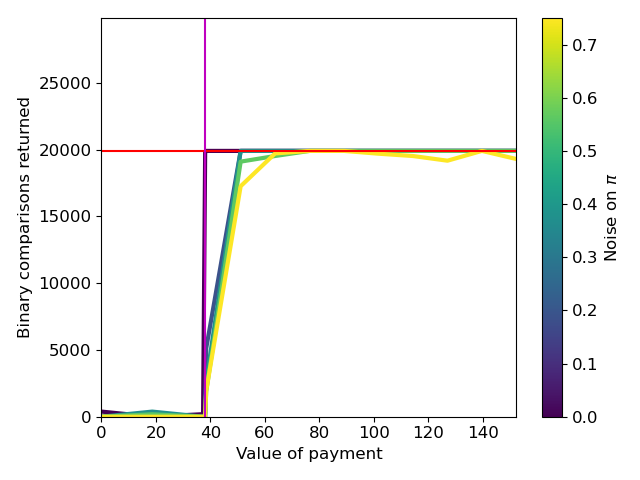}
    \caption{$\psi = 0.04$ and $s = 200$}
    \label{fig:compsvspi0.04200}
\end{subfigure}
\hfill
\begin{subfigure}{0.32\textwidth}
    \includegraphics[width=\textwidth]{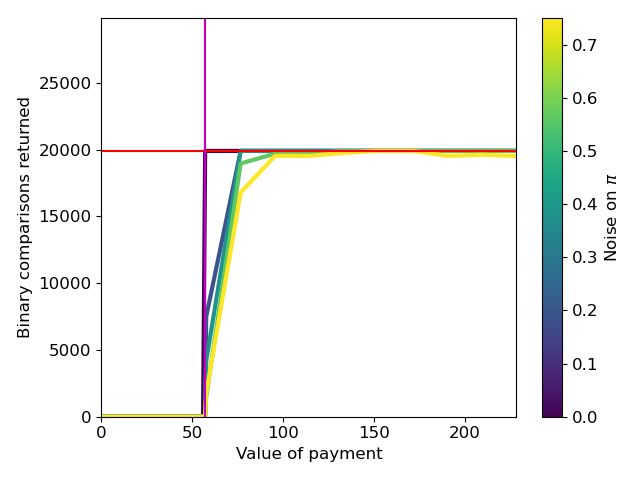}
    \caption{$\psi = 0.06$ and $s = 200$}
    \label{fig:compsvspi0.06200}
\end{subfigure}
\caption{Number of comparisons returned as a function of parameter $\pi$}
\label{fig:extracompsvspi}
\end{figure}

\smallskip
\emph{Discussion of Figure \ref{extraopt}.}
In Figure \ref{extraopt} we offer a range of parameters corresponding to Figure \ref{fig:opt} in Section \ref{sec:experiments}. Specifically, we vary the probability than an effortful agent is  ``good'', which take values in $\{0.3, 0.6, 0.9\}$ and the principal's per-comparison disutility, $\psi$, and utility $\lambda$, which takes values in $\{4, 6, 8\}$.  The multicolored plots indicate the principal's utility for different sample sizes (namely, $10, 20, 100$, and $500$) drawn from $\mathcal{D}$, which in this case is $\mathcal{N}(0.03, 0.01)$. Notice that we do not normalize the y-axis in this case. Results are averaged across 10 trials.

We see that as agents become more reliable (higher value of $\pi$) and as the principal's value for the correct sorting grows ($\lambda$), it becomes increasingly beneficial to the principal to run \textsc{CrowdSort}, than perform all pairwise comparisons on her own. Interestingly, even if $\pi$ is $0.3$, for sufficiently large values of $\bar{\psi}$, the agent's work is more profitable to the principal than sorting herself. This suggests that for a  sufficiently large value of $\bar{\psi}/{\psi}$, incentivizing agents under our proposed contract is substantially more beneficial to the principal than sorting the items herself.

\begin{figure}[h]
\centering
\begin{subfigure}{0.32\textwidth}
    \includegraphics[width=\textwidth]{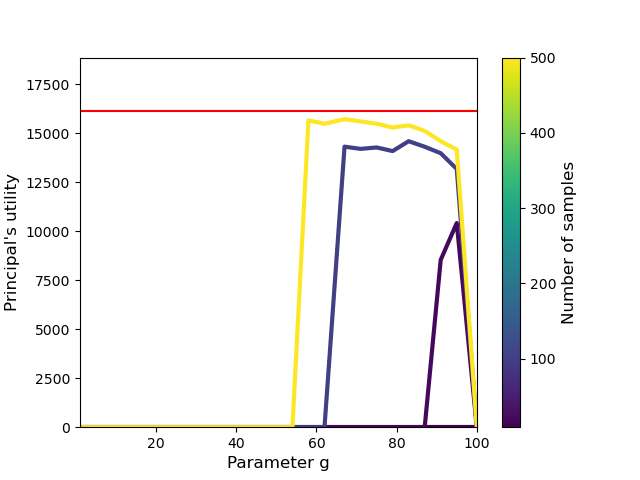}
    \caption{$\bar{\psi} = 4$ and $\pi= 0.3$}
    \label{fig:opt1000.34}
\end{subfigure}
\hfill
\begin{subfigure}{0.32\textwidth}
    \includegraphics[width=\textwidth]{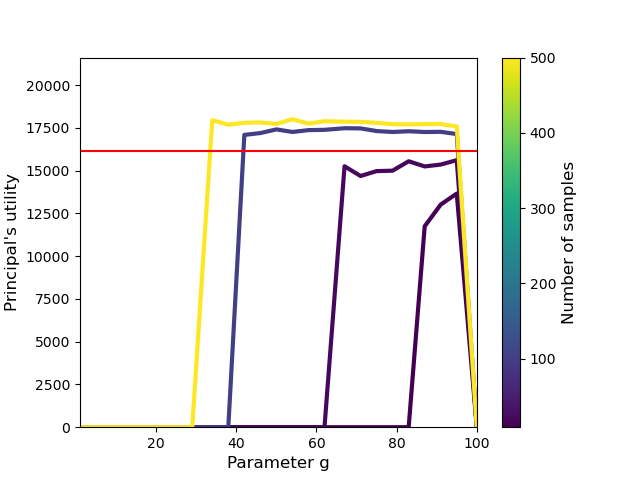}
    \caption{$\bar{\psi} = 4$ and $\pi= 0.6$}
    \label{fig:opt1000.64}
\end{subfigure}
\hfill
\begin{subfigure}{0.32\textwidth}
    \includegraphics[width=\textwidth]{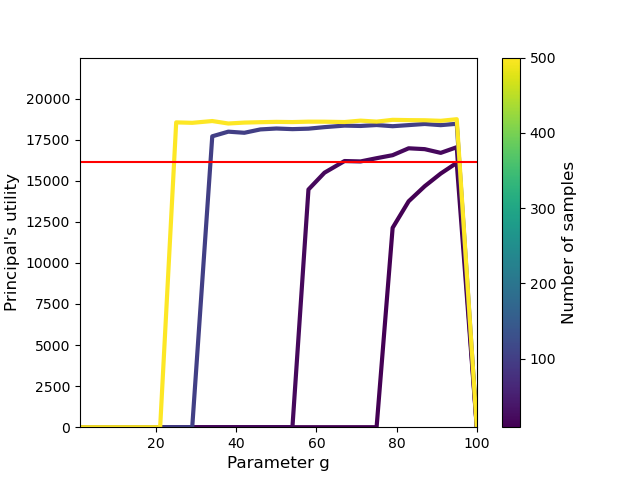}
    \caption{$\bar{\psi} = 4$ and $\pi= 0.9$}
    \label{fig:opt1000.94}
\end{subfigure}
\hfill
\begin{subfigure}{0.32\textwidth}
    \includegraphics[width=\textwidth]{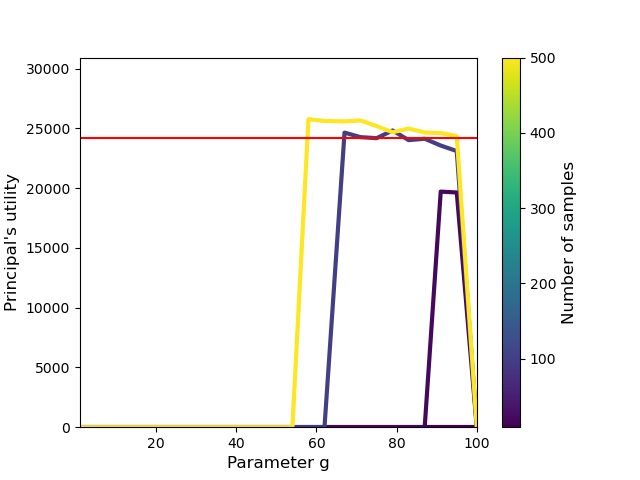}
    \caption{$\bar{\psi} = 6$ and $\pi= 0.3$}
    \label{fig:opt1000.36}
\end{subfigure}
\hfill
\begin{subfigure}{0.32\textwidth}
    \includegraphics[width=\textwidth]{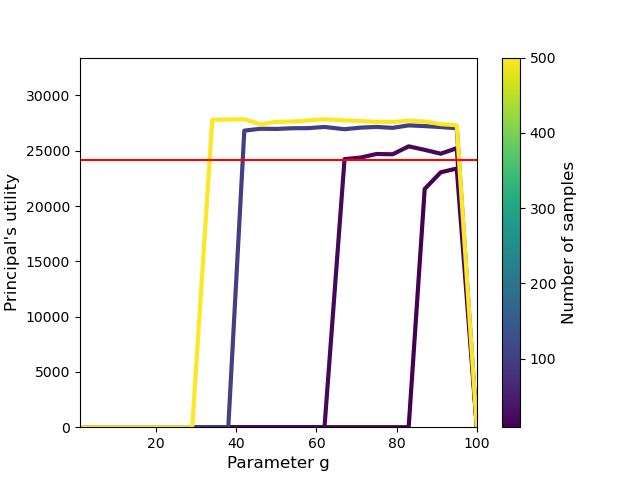}
    \caption{$\bar{\psi} = 6$ and $\pi= 0.6$}
    \label{fig:opt1000.66}
\end{subfigure}
\hfill
\begin{subfigure}{0.32\textwidth}
    \includegraphics[width=\textwidth]{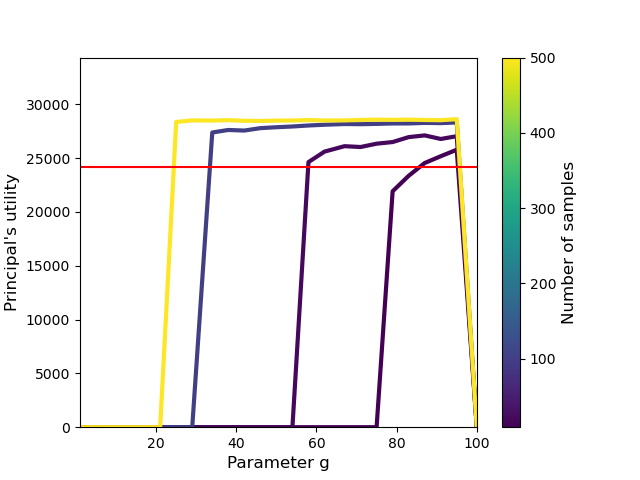}
    \caption{$\bar{\psi} = 6$ and $\pi= 0.9$}
    \label{fig:opt1000.96}
\end{subfigure}
\hfill
\begin{subfigure}{0.32\textwidth}
    \includegraphics[width=\textwidth]{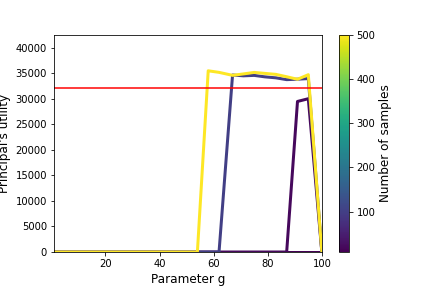}
    \caption{$\bar{\psi} = 8$ and $\pi= 0.3$}
    \label{fig:opt1000.38}
\end{subfigure}
\hfill
\begin{subfigure}{0.32\textwidth}
    \includegraphics[width=\textwidth]{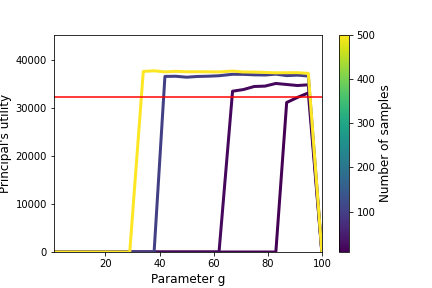}
    \caption{$\bar{\psi} = 8$ and $\pi= 0.3$}
    \label{fig:opt1000.68}
\end{subfigure}
\hfill
\begin{subfigure}{0.32\textwidth}
    \includegraphics[width=\textwidth]{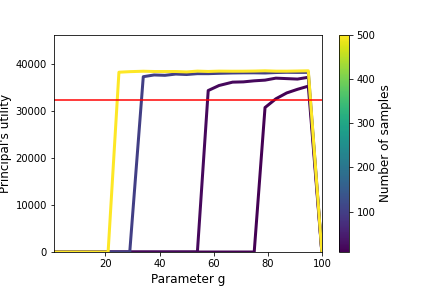}
    \caption{$\bar{\psi} = 8$ and $\pi= 0.9$}
    \label{fig:opt1000.98}
\end{subfigure}
        
\caption{Principal's utility as a function of parameter g}
\label{extraopt}
\end{figure}

\end{appendices}
\end{document}